\DeclareMathOperator*{\expect}{\mathbb{E}}
\DeclareMathOperator{\Inf}{Inf}
\DeclareMathOperator{\I}{\mathbf{I}}
\DeclareMathOperator{\MAJ}{MAJ}
\DeclareMathOperator{\sgn}{sgn}
\newcommand{\quadgraph}[1]
{\tikzstyle{every node}=[circle, draw, fill=black!50, inner sep=0pt, minimum width=3pt]
\begin{tikzpicture}[thick,baseline=-4pt]
\node at (-0.25,-0.25)(1){};
\node at (-0.25, 0.25)(2){};
\node at ( 0.25, 0.25)(3){};
\node at ( 0.2, -0.25)(4){};
\draw { #1 };
\end{tikzpicture}}
\title[On Graphs and the Gotsman-Linial Conjecture for $d=2$]{On Graphs and the Gotsman-Linial Conjecture for $d=2$}
\author{Hyo Won Kim, Christopher Maldonado and Jake Wellens} 
\begin{document}

\begin{abstract}
Given a polynomial $p(x) : \{-1,1\}^n \to \R^{\times}$, the associated polynomial threshold function (PTF) is a boolean function $f : \{-1,1\}^n \to \{-1,1\}$ defined by $f(x) = \sgn(p(x))$. A conjecture of Gotsman and Linial \cite{GL} posits that among all degree-$d$ PTFs, the one with the largest influence corresponds to the symmetric degree-$d$ polynomial which alternates sign at the $d+1$ values of $\sum_{i=1}^n x_i$ closest to 0, having influence $\Theta(d \sqrt{n})$. We give an infinite class of counterexamples when $d = 2$, thereby disproving the conjecture as originally stated. However, as a theorem of Kane \cite{Kane} shows, for constant $d$, and any degree-$d$ PTF $f$ on $n$ variables, $\I[f] = O(\sqrt{n}\cdot\poly\log(n))$, so at least the conjectured bound $O(d\sqrt{n})$ is not too far off for small $d$. We examine the case $d=2$, i.e. when $f(x) = \sgn(x^TAx + b^Tx + c)$, and using only elementary methods, we remove the $\poly\log(n)$ from Kane's bound in a variety of special cases, based on graph properties of the matrix $A$, interpreted as a weighted adjacency matrix. \end{abstract}
\maketitle

\section{Introduction}

An useful way to measure the complexity of a boolean function $f: \{-1,1\}^n \to \{-1,1\}$ is through its \emph{influence} or \emph{average sensitivity}, defined as
$$\I[f] := \sum_{i=1}^n \Pr_{x \sim \{-1,1\}^n}[f(x_1, \dots, x_i, \dots, x_n) \neq f(x_1, \dots, -x_i, \dots, x_n)].$$
Identifying boolean functions with subsets of the hypercube graph on $\{-1,1\}^n$, influence counts (up to a factor of $2^{-n+1}$) the number of \emph{boundary edges} of such a subset. By way of Fourier analysis, influence can be related to many other complexity measures, such as noise sensitivity\footnote{The \emph{noise sensitivity} $\NS_{\delta}(f)$ of $f: \{-1,1\}^n \to \{-1,1\}$ is defined as $\Pr_{x \sim \{-1,1\}^n}[f(x) \neq f(y)]$, where each bit $y_i$ of $y$ is formed independently by flipping $x_i$ with probability $\delta$.}, spectral concentration, decision tree and circuit complexity -- see \cite{ABF} for a largely self-contained summary of these relationships. To be imprecise, the general trend is that \emph{low-influence functions are ``easier to compute''} in many models of computation than are those with high influence. Thus, lower bounds on influence can provide circuit lower bounds (e.g. $\PARITY \not \in \AC^0$), while upper bounds on influence can provide algorithms for PAC learning classes of functions from random examples. This paper is concerned with upper bounding the influence of a class of boolean functions called $\emph{quadratic threshold functions}$.

\emph{Polynomial threshold functions} (PTFs), more generally, are boolean functions $f: \{-1,1\}^n \to \{-1,1\}$ which can be represented as the sign of a polynomial, i.e.
$$f(x) = \sgn(p(x_1, \dots, x_n))$$
for some multilinear polynomial $p$.\footnote{For obvious reasons, such functions are also called \emph{threshold of parities circuits}. Note also that any boolean function on n variables is equal to the sign of its Fourier transform, and is therefore trivially a degree-$n$ PTF. Usually one is concerned with functions having \emph{low-degree} PTF representations, often much lower than the degree of the function's Fourier transform.} When the degree of $p$ is equal to $d$, we say $f$ is a degree-$d$ PTF. Perhaps most common in applications are PTFs of degree $d=1$ or $d=2$, known as \emph{linear threshold functions} (LTFs) and \emph{quadratic threshold functions} (QTFs), respectively.  LTFs have been extensively studied in the literature dating back to the 1960's, as such functions play a vital role in computational models of neural networks (see \cite{MP} for a classical introduction). In particular, it has long been known (as early as 1971, see \cite{PO}) that, among linear threshold functions, the majority function $$\MAJ_n(x_1, \dots, x_n) = \sgn(x_1 + \cdots + x_n)$$
has the largest total influence of $n$-variable LTFs. This influence can be computed explicitly and shown to be $\Theta(\sqrt{n})$. This fact has been shown (\cite{Harsha},\cite{Peres}) to imply a noise-sensitivity bound of $\NS_{\delta}(f) \leq O(\sqrt{\delta})$ for any LTF $f$, which in turn implies $f$ is $\epsilon$-concentrated up to degree $O(1/\epsilon^2)$, and therefore $f$ can be PAC learned\footnote{This is not the first or the fastest algorithm for learning halfspaces. However, since $\NS_{\delta}(f_1 \land f_2) \leq \NS_{\delta}(f_1) + \NS_{\delta}(f_2)$, this also yields the first algorithm for PAC learning \emph{intersections} of O(1) halfspaces in polynomial time.} from random examples in time $n^{O(1/\epsilon^2)}$.

Like their linear counterparts, quadratic threshold functions show up in a variety of real-world applications, such as Boltzmann machines (a type of stochastic neural network) and the Ising model (for describing the statistics of ferromagnetic spin configurations), although much less is known about QTFs than LTFs. We do not know which QTF has the largest influence, nor do we know tight bounds on influence, noise stability or spectral concentration.

Noting that $\MAJ_n$ is the unique unbiased LTF which is \emph{symmetric} in its $n$ variables, one might expect something similar to hold for PTFs of any degree $d$ --- namely that, among all degree $d$ PTFs $f$ on $n$ variables, the maximum value of $\I[f]$ is obtained by a symmetric function. In that case, the maximizer would be of the form
$$f(x) = \sgn\left(p_d\left(\sum_{i=1}^n x_i\right)\right)$$
for some univariate degree-$d$ polynomial $p_d(t)$. In this case, $f$ would be sensitive to $x$ in the $i$th coordinate if changing the sign of $x_i$ shifts $\sum_{i=1}^n x_i$ across a root of $p_d(t)$, and since $\Pr_{x \sim \{-1,1\}^n}[ \sum_{i=1}^n x_i = k]$ decreases as $k$ moves away from 0, the $p$ which maximizes $\I[f]$ will be one which alternates sign at the $d+1$ values of $\sum_{i=1}^nx_i$ closest to 0. Explicitly, if $n$ is odd, then one may take
$$p_d(t) := t(t-2)(t+2)(t-4)(t+4)\cdots (t+(-1)^{\lfloor d/2 \rfloor}\cdot 2 \lfloor d/2 \rfloor),$$
while for even $n$, one may take $p_d(t)$ to be the same as above with $t+1$ plugged in for $t$. Indeed, this reasoning led Gotsman and Linial to make the following conjecture: 

\begin{conj*}[Gotsman-Linial \cite{GL}]: The degree-$d$ PTF on $n$ variables with maximal influence is fully symmetric. Equivalently, for any PTF $f$ on $n$ variables with degree $d\ge2$, \footnote{Note that the conjecture is usually stated using only the form given here for even $n$, with no mention of the parity of $n$. It is shown in Appendix \ref{igl} to this paper that this is an error, perhaps propagating from the originally ambiguous floor/ceiling notation in \cite{GL}.}
$$\I[f] \leq \I[\sgn(p_d(x_1 + \cdots + x_n))] =  \begin{cases} 2^{-n+1}\displaystyle\sum_{k=0}^{d-1}\dbinom{n}{\lfloor (n-k)/2 \rfloor}(n- \lfloor (n-k)/2 \rfloor), & n \text{ even} \\ 2^{-n+1}\displaystyle\sum_{k=0}^{d-1}\dbinom{n}{\lceil (n-k)/2 \rceil}(n- \lceil (n-k)/2 \rceil), & n \text{ odd}

\end{cases}$$
$$= : \I_{GL}(n,d).$$
\end{conj*}

We show in this paper that the Gotsman-Linial conjecture is not, in general, true.\footnote{In the midst of our research, we became aware of unpublished and independent work by Brynmor Chapman, who discovered counterexamples to many cases of the conjecture for $d \geq 2$.} Using a linear-programming based algorithm described in Section \ref{counterexample}, we were able to identify the following 3-variable quadratic polynomial, which produces counterexamples to the conjecture in the case of $d=2$ for all odd $n\ge 5$. 
\be \nn
p(x, y, z) := 2x(1 - 7y + z) + 4 y - 7y^2 + 4yz + 6z + 3z^2, 
\ee

Indeed:

\begin{thm} \label{generalize}
For odd $n\ge 5$, let 
$$f_n(x):=\sgn\left(p\left(x_1,\sum_{i=2}^{n-2} x_i, x_{n-1}+x_n\right)\right)$$

Then,
$$\frac{\I[f_n]}{\I_{GL}(2,n)} = 1 + \frac{1}{8n} + O\mathopen{}\left(\frac{1}{n^2}\right). $$

\end{thm}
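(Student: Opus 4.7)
The plan is to decompose $\I[f_n]$ by coordinate, exploiting that $p(x_1, Y, Z)$ depends on the input only through $x_1$, $Y := \sum_{i=2}^{n-2} x_i$, and $Z := x_{n-1} + x_n$. By the symmetry of $f_n$ under permutations of $\{x_2, \ldots, x_{n-2}\}$ and of $\{x_{n-1}, x_n\}$, we have
\begin{equation*}
\I[f_n] \;=\; \I_1[f_n] \;+\; (n-3)\, \I_2[f_n] \;+\; 2\, \I_{n-1}[f_n],
\end{equation*}
so it suffices to compute these three quantities.

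For the middle contribution, I would fix $(x_1, Z)$ and view $Y \mapsto \sgn(p(x_1, Y, Z))$ as a symmetric function of the $n-3$ middle variables. The standard formula for the total influence of such a symmetric function gives $(n-3) \sum_k \Pr[Y' = k]$, where the sum runs over odd integers $k$ at which $\sgn(p(x_1, k-1, Z)) \neq \sgn(p(x_1, k+1, Z))$ and $Y'$ is a sum of $n-4$ independent $\pm 1$ bits. Since $p$ is quadratic in $Y$, each of the six pairs $(x_1, Z) \in \{-1,1\} \times \{-2,0,2\}$ produces at most two such crossings, which I would enumerate by evaluating $p$ at $Y \in \{-4,-2,0,2,4\}$ and reading off the sign pattern. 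I would then handle $\I_1[f_n]$ by directly enumerating the pairs $(Y, Z)$ at which $\sgn(p(1,Y,Z)) \neq \sgn(p(-1,Y,Z))$, and $\I_{n-1}[f_n]$ by conditioning on $(x_{n-1},x_n)$ and collecting the result in terms of the two probabilities $\Pr[\sgn(p(x_1,Y,0)) \neq \sgn(p(x_1,Y,\pm 2))]$.

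At this stage every contribution is a rational combination of central-binomial probabilities $\Pr[Y=2j]$ and $\Pr[Y'=2j+1]$ for a small number of integers $j$, weighted by $\Pr[Z=z]$. Normalizing each such probability against $\binom{n-1}{(n-1)/2}$ and invoking the elementary ratio $\binom{n-1}{(n-1)/2 - \ell} / \binom{n-1}{(n-1)/2} = 1 - \Theta(\ell^2/n) + O(n^{-2})$, every probability expands as $c/\sqrt{n} + c'/n^{3/2} + O(n^{-5/2})$ with explicitly computable rationals $c, c'$. For odd $n$ the Gotsman--Linial bound reduces to $\I_{GL}(2,n) = 2^{-n+1} n \binom{n}{(n-1)/2}$, which admits the same kind of expansion, so $\I[f_n]/\I_{GL}(2,n)$ becomes a quotient of two series with matching leading $\sqrt{n}$ behavior; the $1 + 1/(8n) + O(1/n^2)$ ratio is read off by long-division up to second order.

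The main obstacle is not conceptual but arithmetic: the leading $\Theta(\sqrt{n})$ terms in the numerator and denominator must cancel exactly — certifying that $f_n$ is ``very nearly as sensitive as'' the symmetric candidate — and the $1/(8n)$ excess emerges only at the next order, so a single tracking error in any of the six middle cases, the $\I_1$ enumeration, or the $\I_{n-1}$ reduction will destroy the advertised constant. A clean organization is to bundle all contributions by their $\binom{n-k}{\cdot}$-type and expand each ratio to order $1/n^2$ before recombining. Some care is required at small odd $n$ to ensure each of the six $(x_1, Z)$ cases contributes its full count of crossings within the support of $Y$; the hypothesis $n \geq 5$ is precisely what makes every listed crossing point lie in $\{-(n-3),\dots,n-3\}$.
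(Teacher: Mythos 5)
Your plan is exactly the paper's proof: decompose $\I[f_n]$ by coordinate-symmetry class, use the restriction lemma to reduce each class's contribution to a finite enumeration of sign changes of $p$ over the six $(x_1, Z)$ pairs, and express the totals in near-central binomial coefficients before comparing with $\I_{GL}(2,n)$; the only organizational difference is that the paper keeps exact binomial sums throughout (obtaining an exact rational intermediate expression for the ratio), whereas you propose expanding each probability to $O(n^{-5/2})$ before recombining, and both routes deliver the stated $1 + \tfrac{1}{8n} + O(1/n^2)$. One small inaccuracy in your sketch is the remark that $n \ge 5$ already guarantees every enumerated crossing lies in the support of $Y$ --- the paper's closed-form computation, which involves $\binom{m}{m/2-3}$ with $m=n-3$, is carried out for $n\ge 9$, and the cases $n=5,7$ are checked separately by direct evaluation.
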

It is shocking, in the authors' humble opinion, that the “influence maximizer is
symmetric” principle which holds for linear threshold functions (as well as for $d = n$) does not persist for
$d = 2$. However, for most applications, it would suffice to know only that the Gotsman-Linial bound is of the \emph{correct asymptotic order} as $n \to \infty$. It's not difficult to show that, for $d < \sqrt{n}$, $\I_{GL}(n,d)$ is of order $\Theta(d \sqrt{n})$, and hence a natural relaxation of the original Gotsman-Linial conjecture is the following:

\begin{conj*}[Weak Gotsman-Linial]
For any degree-$d$ threshold function $f$ on $n$ variables,
$$\I[f] \leq O(d\cdot\sqrt{n}).$$
\end{conj*}

Using arguments from \cite{Harsha} and \cite{ABF}, the weaker conjecture would yield all of the usual desired properties for low-degree PTFs: $O(d\sqrt{\delta})$ noise sensitivity, $O(d^2/\epsilon^2)$-concentration, and PAC learnability in time $n^{O(d^2/\epsilon^2)}$. To the best of our knowledge, the best known upper bound on influence for general PTFs is the following theorem of Kane \cite{Kane}:

\begin{thm*}[Kane, 2013] There exists a constant $M$ such that, for any degree-$d$ PTF $f$ on $n$ variables, 
$$\I[f] \leq \sqrt{n}\cdot (2^d\cdot \log{n})^{M\cdot d\log d}.$$
\end{thm*}

For constant $d$, Kane's result comes close to the conjectured $O(\sqrt{n})$, differing only by a polylogarithmic factor. However, it quickly becomes trivial when $d$ is allowed to grow with $n$ (for $d = \Omega(\sqrt{\log n/\log \log n})$ the bound is already $\Omega(n)$), and even for constant $d$, it still leaves open the question of whether a polylogarithmic factor is necessary: can an asymmetric QTF, for instance, really have influence which is $\omega(1)$-times bigger than all symmetric QTFs?

In this paper, we answer this question for a few important classes of QTFs. We define the notion of \emph{support} for a QTF
$$f(x_1, \dots, x_n) = \sgn\left(\sum_{1 \leq i < j \leq n} a_{ij}x_ix_j + \sum_{i = 1}^n b_ix + c\right)$$
as the graph\footnote{More precisely, since a boolean function may have many representations as a QTF, we say $f$ is supported on $G = (V,E)$ if it \emph{has a representation} with $\{(i,j) : a_{ij} \neq 0\} \subseteq E$.} $G = (V, E)$ on $n$ vertices such that $E = \{(i,j) : a_{ij} \neq 0\}$. Almost all pre-existing work on bounding the influence of threshold functions follows a similar paradigm: estimates are first derived in the Gaussian setting for ``regular'' PTFs (i.e. those for which no single coordinate has too much influence) using anti-concentration results for low-degree polynomials, then transferred to the Bernoulli setting by way of an invariance principle, and extended to ``irregular'' PTFs using a regularity lemma. (This approach was used in \cite{Harsha} to obtain the first non-trivial upper bound of $O\mathopen{}\left(n^{1 - \frac{1}{4d+2}}\right)$ for the influence of degree-$d$ PTFs, and subsequently improved by Kane in \cite{Kane}.) Our approach uses none of this machinery --- in fact, our methods are elementary and completely independent of any information about the coefficients other than which ones are non-zero.\footnote{Of course, one can always perturb coefficients slightly without changing a PTF's truth table, so this is still in some sense an ``open'' condition.} Using a simple \emph{induced subgraph covering lemma}, proved in Section \ref{bounds}, we are able to obtain the following:

\begin{thm} \label{fracch}
If $f:\set{-1,1}^{n}\rightarrow\set{-1,1}$ is a QTF supported on a graph $G$ with fractional chromatic number\footnote{This is a standard term from graph theory, and we give the precise definition in Section \ref{bounds}. For now, the unfamiliar reader may replace $\chi_f(G)$ by the chromatic number $\chi(G)$, since $\chi_f(G) \leq \chi(G)$.} $\chi_f(G)$, then
$$\I[f] \leq \sqrt{\chi_f(G)}\cdot\sqrt{n}.$$
\end{thm}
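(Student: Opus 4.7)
The plan is to reduce the influence of $f$ to that of linear threshold functions, via restrictions along independent sets of the support graph $G$, and then combine using a fractional coloring together with Cauchy-Schwarz.

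First I would observe that if $I \subseteq V(G)$ is an independent set of $G$, then for every fixed choice of the variables $x_{V \setminus I} \in \{-1,1\}^{V \setminus I}$, the restricted function $g := f(\cdot, x_{V \setminus I})$ is a linear threshold function on the $|I|$ variables $\{x_i\}_{i \in I}$. This is because every quadratic monomial $a_{ij}x_ix_j$ of the defining polynomial has $(i,j) \in E$, so it cannot have both endpoints in $I$; after the restriction, such monomials become either constants or linear functions of a single $x_i$ with $i \in I$. Using the standard identity
\[
\sum_{i \in I} \Inf_i[f] \;=\; \expect_{x_{V \setminus I}}\!\left[\,\I\!\left[f(\cdot, x_{V \setminus I})\right]\right],
\]
together with the Peres--Gotsman-Linial bound that any LTF on $m$ variables has total influence at most $\I[\MAJ_m] \le \sqrt{m}$, I obtain the key inequality
\[
\sum_{i \in I} \Inf_i[f] \;\le\; \sqrt{|I|}.
\]

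Next I invoke the definition of the fractional chromatic number: there exist independent sets $I_1,\dots,I_k$ of $G$ and nonnegative weights $w_1,\dots,w_k$ with $\sum_j w_j = \chi_f(G)$ and $\sum_{j:\, v \in I_j} w_j \ge 1$ for every $v \in V$. By removing vertices from independent sets where possible (which preserves independence), I may assume without loss of generality that each of these constraints is tight, so that $\sum_{j:\, v \in I_j} w_j = 1$ for all $v$ and hence $\sum_j w_j |I_j| = n$.

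Combining the two ingredients, I weight the per-independent-set inequality by $w_j$ and sum:
\[
\I[f] \;=\; \sum_{v \in V} \Inf_v[f] \;=\; \sum_j w_j \sum_{i \in I_j} \Inf_i[f] \;\le\; \sum_j w_j \sqrt{|I_j|}.
\]
Finally, Cauchy-Schwarz yields
\[
\sum_j w_j \sqrt{|I_j|} \;\le\; \Bigl(\sum_j w_j\Bigr)^{1/2}\Bigl(\sum_j w_j |I_j|\Bigr)^{1/2} \;=\; \sqrt{\chi_f(G)}\cdot\sqrt{n},
\]
which is the claimed bound. The only conceptually non-trivial step is the reduction to LTFs on independent sets, but that follows directly from the definition of support and the multilinearity of the defining polynomial; the main obstacle is really just getting the normalization of the fractional coloring right so that $\sum_j w_j |I_j|$ equals $n$ rather than something larger, which the ``shrink-to-tight'' argument resolves cleanly.
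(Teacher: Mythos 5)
Your proposal takes essentially the same route as the paper: restrict to independent sets where $f$ becomes an LTF, bound each such restricted influence by $\sqrt{|I|}$, and combine via the fractional coloring and Cauchy--Schwarz. The paper packages the combination step as a ``Randomized Covering Lemma'' (sampling an independent set $S$ from a distribution derived from the LP solution, then applying Jensen's inequality $\E[\sqrt{|S|}] \le \sqrt{\E[|S|]}$), but this is the same computation as your weighted sum followed by Cauchy--Schwarz.

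The one step you should tighten is the ``shrink-to-tight'' normalization. You claim that ``by removing vertices from independent sets where possible'' you may assume $\sum_{j:\,v\in I_j} w_j = 1$ for all $v$, but deterministic vertex removal alone does not always achieve this: if a vertex $v$ appears in $I_1$ and $I_2$ both with weight $0.7$, deleting $v$ from either set drops its coverage from $1.4$ to $0.7 < 1$. What actually works is a \emph{fractional split}: replace $I_j$ (weight $w_j$) by the pair $I_j$ (weight $w_j - \delta$) and $I_j\setminus\{v\}$ (weight $\delta$), which leaves the total weight and all other coverages unchanged while decreasing the coverage of $v$ by exactly $\delta$; iterating this lets you hit every constraint with equality. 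The paper achieves the same effect probabilistically: it defines $q_v$ so that $p_v(1-q_v) = \chi_f(G)^{-1}$ and deletes $v$ from a sampled $S$ independently with probability $q_v$, which is exactly the fractional split in disguise. Once you make this step precise (either way), your argument is complete and matches the paper's.
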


From this theorem we can deduce that the Weak Gotsman-Linial conjecture  holds for QTFs supported on $O(1)$-partite graphs, planar graphs, graphs of bounded degree, to name a few. While such QTFs make up a ``small fraction'' of the space of all QTFs, they do encompass many of the functions that actually appear in practice. Restricted Boltzmann machines (RBMs), for example, commonly used in deep learning networks, are supported on bipartite graphs. The Ising model, too, is most often applied to graphs covered by this theorem: since particles are most often arranged in lattices, the graph of nearest-neighbor interactions is usually $O(1)$ colorable (depending only on the lattice structure and not the number of vertices/particles).

Using a slight modification of an argument from \cite{Harsha}, we are also able to prove a similar bound for sparse graphs:

\begin{thm} \label{edgebound}
If $f:\set{-1,1}^{n}\rightarrow\set{-1,1}$ is a QTF supported on a graph $G = (V, E)$, then
$$\I[f] \leq \sqrt{n + \sqrt{2|E|n}} .$$ 
\end{thm}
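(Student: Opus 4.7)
The plan is to adapt Harsha-Klivans-Meka's approach \cite{Harsha} to bounding the influence of PTFs, which combines a Cauchy-Schwarz step with a second-moment estimate. The goal is to carry out this analysis so that the support graph $G$ enters explicitly, yielding a bound that depends on $|E|$ rather than on $n$ alone.

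For each vertex $i$, decompose the polynomial as $p(x) = x_i L_i(x_{-i}) + Q_i(x_{-i})$, where $L_i = b_i + \sum_{j \sim i} a_{ij} x_j$ is linear in the variables indexed by the neighbors $N(i)$ of $i$ in $G$, and $Q_i$ is quadratic in $x_{-i}$. This yields
\[\Inf_i[f] = \Pr_{x_{-i}}\!\left[|L_i(x_{-i})| > |Q_i(x_{-i})|\right],\]
making the role of $G$ explicit. The key structural fact is that, for $(i,j)\notin E$, the restriction of $p$ to the $(x_i,x_j)$-plane (with $x_{-ij}$ fixed) is affine, so a direct case-check shows that at most one of the four corners $(\epsilon_i,\epsilon_j)\in\{\pm 1\}^2$ can make both $i$ and $j$ simultaneously pivotal. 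Splitting the sum over coordinate pairs into edge and non-edge contributions, bounding each, and then applying Cauchy-Schwarz on the degree sequence via $\sum_i \sqrt{\deg_G(i)} \leq \sqrt{n\sum_i \deg_G(i)} = \sqrt{2|E|n}$ should produce the promised correction.

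The main obstacle is to avoid losing an unnecessary factor in a crude second-moment step. One cannot simply use $\I[f]^2 \leq \expect[S(x)^2]$, where $S(x)$ denotes the sensitivity of $f$ at $x$: for $\MAJ_n$ one already has $\expect[S(x)^2] = \Theta(n^{3/2}) > n$, while $\I[\MAJ_n]^2 = \Theta(n)$. The ``slight modification'' of \cite{Harsha} must therefore exploit the gap between $\expect[S]^2$ and $\expect[S^2]$; I expect the argument to use anti-concentration of the linear forms $L_i$ combined with the four-corners observation above to produce a non-edge pair estimate that sums to the desired $\sqrt{2|E|n}$ correction rather than a cruder $O(n^2)$ term.
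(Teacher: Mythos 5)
Your building blocks are the right ones --- the decomposition of $p$ into $x_iD_ip + E_ip$ so that $D_ip$ is a linear form depending only on the neighbors of $i$, and the closing Cauchy--Schwarz $\sum_i\sqrt{\deg(i)}\le\sqrt{2|E|n}$ --- and you correctly diagnose that a naive second moment on the sensitivity $S(x)=\sum_i\mathbbm{1}[i\text{ pivotal at }x]$ is too lossy ($\expect[S^2]=\Theta(n^{3/2})$ already for $\MAJ_n$). But you leave the fix unresolved, and the tools you reach for won't close it. The four-corners observation, even if verified, bounds $\Pr[\text{both }i,j\text{ pivotal}]$ by a constant, and for $\MAJ_n$ (all pairs are non-edges) these probabilities genuinely are $\Theta(1/\sqrt{n})$, summing to $\Theta(n^{3/2})$. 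So no refinement of corner-counting, with or without anti-concentration of the $L_i$, rescues the $\I[f]^2\le\expect[S^2]$ route. That is a substantive gap, not a presentational one.

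The paper sidesteps the overshoot by replacing the unsigned pivotality indicators with \emph{signed} ones. Since $D_ip(x)$ determines the sign of $D_if(x)$ wherever the latter is nonzero, one has the identity $\Inf_i[f]=\expect_x[x_i\,\sgn(D_ip(x))\,f(x)]$, obtained by writing $f=x_iD_if+E_if$ and using that $x_i$ is independent of both $E_if$ and $\sgn(D_ip)$. Summing over $i$ and applying Cauchy--Schwarz to the random variable $\sum_ix_i\sgn(D_ip)$ (whose variance is $n$, unlike $S$), one gets
\begin{equation*}
\I[f]\le\expect_x\left[\left|\sum_{i}x_i\sgn(D_ip)\right|\right]\le\sqrt{\,n+\sum_{i\ne j}\expect\!\left[x_ix_j\sgn(D_ip)\sgn(D_jp)\right]\,}.
\end{equation*}
Writing $f_i:=\sgn(D_ip)$ and integrating out $x_i$ and then $x_j$, each cross term equals $\expect[D_jf_i\cdot D_if_j]$. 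Now the graph structure annihilates non-edge contributions automatically: if $(i,j)\notin E$ then $f_i$ does not depend on $x_j$, so $D_jf_i\equiv 0$ and the term vanishes exactly --- no anti-concentration or corner argument is needed. For $(i,j)\in E$, Cauchy--Schwarz plus AM--GM gives $\expect[D_jf_iD_if_j]\le\tfrac12(\Inf_j[f_i]+\Inf_i[f_j])$, and the cross-sum collapses to $\sum_i\I[f_i]\le\sum_i\sqrt{\deg(i)}$, which is precisely where your final Cauchy--Schwarz on the degree sequence enters.
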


We also remark that our covering lemma gives a way of ``throwing out'' a number of problematic vertices. Indeed, using this observation, one can extend Theorems \ref{fracch} and \ref{edgebound} to a slightly wider class of graphs: if $f$ is supported on a graph $G'$ which satisfies $G' \setminus H = G$ for some induced subgraph $H$, then $f$ obeys the same influence bound as if it were supported on $G$, with an extra additive term 
$$\I[H] := \max_{g \in \text{QTFs supported on H}} \I[g].$$
In particular, combining this observation with Kane's $\sqrt{n} \cdot \poly\log(n)$ bound, we conclude that the weak Gotsman-Linial conjecture holds for QTFs supported on graphs which, after removing $n/\Omega(\poly\log(n))$ vertices, have $O(1)$ fractional chromatic number or $O(n)$ edges. As an example, this applies to graphs with a small ($O(n/\poly\log(n))$) vertex cover.

We prove our main theorems in Section \ref{bounds}. In Section \ref{counterexample}, we discuss counterexamples to the Gotsman-Linial conjecture, and raise a few interesting questions regarding the effect of a QTF's support on its influence. 

\section{Influence bounds for QTFs supported on special graphs} \label{bounds}
 \subsection{Preliminaries}
 
\begin{defn} \label{inf_i}
The \emph{influence} of the $i$th coordinate on $f: \{-1,1\}^n \to \{-1,1\}$ is defined as
\begin{equation*}
\Inf_i[f] := \Pr_{\bm{x}\sim\set{-1,1}^n}[f(\bm{x}) \neq f(\bm{x}^{\oplus i})],
\end{equation*} 
where $ x^{\oplus i} = (x_1,x_2,...,-x_i,x_{i+1},...,x_n).$ Moreover, the \emph{(total) influence} is
\begin{equation*}
\I[f] := \sum_{i=1}^n \mathrm{Inf}_{i}[f].
\end{equation*}
\end{defn}

\begin{defn}
We say that a QTF $f: \{-1,1\}^{n} \to \{-1,1\}$ of the form
\begin{equation*}
f(x)=\mathrm{sgn}\left(\sum_{i<j} a_{i,j} x_i x_j + \sum_{i=1}^n a_i x_i + a_0\right)
\end{equation*}
is \emph{supported} on the graph $G=(V,E)$, if $\set{(i,j): a_{i,j} \neq 0} \subset E.$ Moreover, let $Q_{G}$ denote the set of QTFs supported on $G$.
\end{defn}

\begin{defn}
Let $G=(V,E)$ be a graph on $n$ vertices. We define the \emph{maximal QTF influence} of $G$ to be 
$$ \I[G] := \max_{f \in Q_{G}} \I[f].$$
\end{defn}

\begin{comment}
\begin{lem}
Let $f: \{-1,1\}^{n} \to \{-1,1\}$ be a QTF. Then, there exists a function $g: \{-1,1\}^{n+3} \to \{-1,1\}$ of the form $g(x) = \sgn(x^{T}Ax)$ such that
$$ \I[f] \leq 8 \cdot I[g].$$
\end{lem}

\begin{proof}
Let function $f$ be of the form $f(x) = \sgn(x^{T}Ax+bx+c)$. We define function $g$ by $$g(x,y) = \sgn(x^{T}Ax+bxy_{0}+cy_{1}y_{2}),$$ where $y = (y_{0},y_{1},y_{2}) \in \{-1,1\}^{3}.$ Hence, letting $X_{i}(a)$ be the indicator function for the event where $g(a) \neq g(a^{i \oplus})$,
\begin{align*}
\Inf_{i}[g] &= \frac{1}{2^{n+3}} \cdot \sum_{a \in \{-1,1\}^{n+3}} X_{i}(a)\\ &= \frac{1}{2^{n+3}} \cdot \sum_{y \in \{-1,1\}^{3}} \sum_{x \in \{-1,1\}^{n}} X_{i}(x,y) \\
&\geq \frac{1}{8} \left(\frac{1}{2^{n}} \cdot \sum_{x \in \{-1,1\}^{n}} X_{i}(x,\mathbbm{1}) \right)= \frac{1}{8} \Inf_{i}[f]
\end{align*}

\end{proof}
\end{comment}

\begin{defn} \label{derivative}
Let $f:\{-1,1\}^{n}\rightarrow \R$ be a real Boolean function. The $i$th discrete derivative operator maps $f$ to the function $D_{i}f:\{-1,1\}^{n-1}\rightarrow \R$ defined by 
$$ D_{i}f(x) = \frac{f(x^{i\rightarrow 1})-f(x^{i\rightarrow -1})}{2}, $$ where $ x^{i\rightarrow a} = (x_{1}, \cdots, x_{i-1}, a, x_{i+1}, \cdots, x_{n})$.
\end{defn}

For a fixed $x$, note that $D_{i}f(x) = \pm 1$ whenever $f(x) \neq f(x^{\oplus i})$, so we obtain the useful fact:

\begin{fact}For all boolean functions $f$,
$$\Inf_{i}[f] = \expect_{x \sim \set{-1,1}^{n}}[|D_{i}f(x)|]$$
\end{fact}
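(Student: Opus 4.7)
The plan is to observe that, for a boolean $f$, the discrete derivative $D_i f(x)$ takes values only in $\{-1,0,1\}$, and that its absolute value is precisely the indicator of the event driving the definition of $\Inf_i[f]$. Then the claim reduces to rewriting a probability as an expectation of an indicator.

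First I would note that the value $D_if(x)$ depends only on the coordinates $x_j$ for $j\neq i$, since the formula substitutes both $\pm 1$ for the $i$th coordinate. Because $f$ maps into $\{-1,1\}$, the numerator $f(x^{i\to 1})-f(x^{i\to -1})$ lies in $\{-2,0,2\}$, hence $D_if(x)\in\{-1,0,1\}$. In particular,
\[
|D_if(x)| \;=\; \mathbbm{1}\!\left[f(x^{i\to 1})\neq f(x^{i\to -1})\right].
\]

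Next, I would check that the event $\{f(x^{i\to 1})\neq f(x^{i\to -1})\}$ is the same event as $\{f(x)\neq f(x^{\oplus i})\}$ for every $x\in\{-1,1\}^n$: indeed, whatever $x_i$ happens to be, the pair $\{f(x),\,f(x^{\oplus i})\}$ equals $\{f(x^{i\to 1}),\,f(x^{i\to -1})\}$ as a multiset, so the two values agree (or disagree) in the same cases. Therefore
\[
|D_if(x)| \;=\; \mathbbm{1}\!\left[f(x)\neq f(x^{\oplus i})\right].
\]

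Finally, taking expectation over $x\sim\{-1,1\}^n$ and using that the expectation of an indicator is the probability of the event, I recover
\[
\expect_{x\sim\{-1,1\}^n}\!\bigl[|D_if(x)|\bigr] \;=\; \Pr_{x\sim\{-1,1\}^n}\!\left[f(x)\neq f(x^{\oplus i})\right] \;=\; \Inf_i[f],
\]
as desired. There is no real obstacle here; the only subtlety worth flagging is the coordinate-independence observation that allows one to freely switch between the ``$i\to\pm 1$'' and ``$\oplus i$'' viewpoints when taking the expectation.
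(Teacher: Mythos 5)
Your proof is correct and matches the paper's reasoning: the paper simply notes that $D_if(x)=\pm 1$ exactly when $f(x)\neq f(x^{\oplus i})$ and presents this as an immediate observation. You spell out the same idea a bit more carefully, including the (correct but routine) check that the events $\{f(x^{i\to 1})\neq f(x^{i\to -1})\}$ and $\{f(x)\neq f(x^{\oplus i})\}$ coincide.
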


The following lemma is a restatement of the fact that among LTFs, majority maximizes the total influence. For the sake of completion, this result is proved in Appendix \ref{fourieranalysis}.
\begin{lem} \label{maj_G}
For an edgeless graph $G$ on $n$ vertices, we have
$$\I[G]=\I[\MAJ_n]=\sqrt{2/\pi}\sqrt{n}+O(1).$$
\end{lem}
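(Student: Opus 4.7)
The plan is as follows. Since $G$ has no edges, every $f \in Q_G$ has no quadratic terms, so $Q_G$ coincides exactly with the class of LTFs on $n$ variables. The lemma thus reduces to two classical facts: that $\MAJ_n$ maximizes total influence among LTFs, and that $\I[\MAJ_n] = \sqrt{2/\pi}\sqrt{n} + O(1)$.

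For the computation of $\I[\MAJ_n]$, I would argue directly: for odd $n$, flipping $x_i$ changes $\MAJ_n(x)$ exactly when $\sum_{j \neq i} x_j = 0$, so by symmetry $\Inf_i[\MAJ_n] = \binom{n-1}{(n-1)/2}/2^{n-1}$, which Stirling's formula evaluates as $\sqrt{2/(\pi n)}\bigl(1 + O(1/n)\bigr)$. Summing over $i$ yields $\I[\MAJ_n] = \sqrt{2n/\pi} + O(1/\sqrt{n})$, and the parity-adjusted calculation for even $n$ gives the same leading behaviour.

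For the maximization, I would first reduce to the unbiased, monotone case: negating any $x_i$ leaves $\I[f]$ invariant, while the total influence of $\sgn(\sum a_i x_i + a_0)$ is maximized at $a_0 = 0$ by a unimodality argument on the distribution of $\sum a_i x_i$. For a monotone $f$, a short computation from $\widehat{f}(\{i\}) = \expect[x_i f(x)]$ together with the identity $\Inf_i[f] = \Pr[f(x^{i\to 1}) = 1] - \Pr[f(x^{i\to -1}) = 1]$ yields $\Inf_i[f] = \widehat{f}(\{i\})$, so that
$$\I[f] = \sum_{i=1}^n \widehat{f}(\{i\}).$$
After normalizing $\sum a_i^2 = 1$, Stein's lemma in the Gaussian setting gives $\widehat{f}(\{i\}) = a_i\sqrt{2/\pi}$ exactly; Cauchy--Schwarz then bounds $\sum_i a_i \leq \sqrt{n}$, with equality precisely when $a_i = 1/\sqrt{n}$ for all $i$, that is, when $f = \MAJ_n$.

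The main obstacle is the Gaussian-to-Bernoulli transfer, which incurs an additive error that must be shown to be $O(1)$ uniformly over all LTFs; this is delicate for ``irregular'' LTFs dominated by a few large weights, but can be handled by a separate direct argument (for instance, if one coordinate carries almost all the weight, one bounds $\I[f] \leq 1 + \I[\MAJ_{n-1}]$ by conditioning on that coordinate). An alternative that avoids Gaussian machinery is to combine the monotone identity $\I[f] = \sum \widehat{f}(\{i\})$ with Cauchy--Schwarz and a classical Chow-parameter estimate showing $\sum_i \widehat{f}(\{i\})^2 \leq 2/\pi + o(1)$ for unbiased LTFs, yielding the same bound $\I[f] \leq \sqrt{2n/\pi} + O(1)$.
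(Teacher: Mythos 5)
Your reduction to LTFs, and your Stirling computation of $\I[\MAJ_n]$, are both correct (the latter is in fact sharper than the paper's CLT argument, which as written only gives $(\sqrt{2/\pi}+o(1))\sqrt n$ rather than $+O(1)$). The real issue is the maximization step. You have every ingredient in hand --- you already know that for a monotone LTF $\Inf_i[f]=\hat f(\{i\})=\expect_x[x_i f(x)]$ --- but instead of summing these to get
$$\I[f]=\expect_x\!\left[(x_1+\cdots+x_n)f(x)\right]\le \expect_x\!\left[|x_1+\cdots+x_n|\right]=\I[\MAJ_n],$$
with equality exactly when $f=\sgn(x_1+\cdots+x_n)$, you detour into Gaussian analysis and Cauchy--Schwarz. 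That detour is what the paper avoids, and it does not close.

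Concretely, three things go wrong. First, the Stein-lemma identity $\hat f(\{i\})=a_i\sqrt{2/\pi}$ is a Gaussian fact; the Bernoulli error is not uniformly $O(1)$ across LTFs, and your proposed patch for the irregular case --- conditioning on a dominant coordinate to get $\I[f]\le 1+\I[\MAJ_{n-1}]$ --- actually exceeds $\I[\MAJ_n]$, since $\I[\MAJ_n]-\I[\MAJ_{n-1}]=O(1/\sqrt n)\ll 1$, so it cannot recover the claimed inequality. Second, the alternative Chow-parameter bound $\sum_i\hat f(\{i\})^2\le 2/\pi+o(1)$ is simply false as a statement about all unbiased LTFs: the dictator $f(x)=\sgn(x_1)$ has $W^1[f]=1$. (That bound holds for LTFs with small individual influences, which is precisely the regularity hypothesis you are trying to dispense with.) Third, even granting such a bound, Cauchy--Schwarz yields $\I[f]\le\sqrt{n}\cdot\sqrt{2/\pi+o(1)}=\sqrt{2n/\pi}+o(\sqrt n)$, an error of order $o(\sqrt n)$ rather than $O(1)$, and in no case does either route give the \emph{exact} equality $\I[G]=\I[\MAJ_n]$ that the lemma asserts. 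The single-line bound $\expect[(\sum_i x_i)f(x)]\le\expect[|\sum_i x_i|]$ is the missing step; everything else in your proposal is correct and can be kept.
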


\begin{defn}
Let $f: \set{-1,1}^n \to \mathbb{R}$, and $J \subset [n]$. Then $f_{J|z}: \set{-1,1}^{|J|} \to \mathbb{R}$ is the restriction of $f$ to $J$, acquired by fixing the coordinates in $[n]\setminus J$ to $z$.
\end{defn}
 
The following simple lemma provides the basis for many of the results used in the proofs of our main theorems.

\begin{lem}\label{infrestriction}
Let $f:\{-1,1\}^{n}\rightarrow \{-1,1\}$ be a Boolean function and $J\subset[n]$. Then, for $i \in J$, $$ \Inf_i[f] = \expect_{z \in \{-1,1\}^{|J^{c}|}}[\Inf_i[f_{J|z}]]. $$
\end{lem}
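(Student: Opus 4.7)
The plan is to unpack the definition of $\Inf_i[f]$ and exploit the product structure of the uniform distribution on $\{-1,1\}^n$. By Definition \ref{inf_i},
$$\Inf_i[f] = \Pr_{x \sim \{-1,1\}^n}[f(x) \neq f(x^{\oplus i})].$$
I would write each $x \in \{-1,1\}^n$ as a pair $(y, z)$ with $y \in \{-1,1\}^J$ and $z \in \{-1,1\}^{J^c}$, and observe that the uniform distribution on $\{-1,1\}^n$ factors as the product of uniform distributions on these two coordinate sets. Since $i \in J$, flipping the $i$th coordinate of $x$ alters $y$ alone, leaving $z$ unchanged; hence the event $\{f(x) \neq f(x^{\oplus i})\}$ is the same as the event $\{f_{J|z}(y) \neq f_{J|z}(y^{\oplus i})\}$, where $i$ is reinterpreted as the corresponding coordinate in $J$.

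With this reformulation, the lemma follows from the law of total expectation, conditioning on $z$:
$$\Inf_i[f] = \mathbb{E}_{z \sim \{-1,1\}^{|J^c|}}\left[\Pr_{y \sim \{-1,1\}^{|J|}}[f_{J|z}(y) \neq f_{J|z}(y^{\oplus i})]\right] = \mathbb{E}_z[\Inf_i[f_{J|z}]].$$
There is no real obstacle here beyond bookkeeping, since the statement is simply the assertion that influence is preserved under conditioning on coordinates outside the flipped index. The only care required is notational, namely the identification of the index $i \in J \subset [n]$ with its corresponding position in the restricted function $f_{J|z} : \{-1,1\}^{|J|} \to \{-1,1\}$.
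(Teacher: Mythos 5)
Your proof is correct and is essentially the paper's own argument: both decompose the uniform measure on $\{-1,1\}^n$ as a product over coordinates in $J$ and $J^c$ and then condition on the $J^c$-part, with the only cosmetic difference being that the paper writes $\Inf_i[f] = \expect_x[D_if(x)^2]$ while you work directly with the probability formulation $\Pr[f(x)\neq f(x^{\oplus i})]$. The two are definitionally the same, so there is no substantive difference.
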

\begin{proof}
We have that
\begin{align*}
\Inf_i[f] &= \expect_{x \in \{-1,1\}^{n}}[D_{i}f^{2}] \\
&= \sum_{z \in \{-1,1\}^{|J^{c}|}} \frac{1}{2^{|J^{c}|}} \cdot \expect_{x \in \{-1,1\}^{|J|}}[D_{i}f_{J|z}^{2}] \\
&= \expect_{z \in \{-1,1\}^{|J^{c}|}}[\Inf_i[f_{J|z}]].
\end{align*}

\end{proof}

A boolean function's influence can also be defined in terms of its Fourier coefficients. A brief overview of Fourier analysis is included in Appendix \ref{fourieranalysis} for completion. 
\subsection{Proof of Main Theorems} 

\begin{lem}[Covering Lemma]
Let $G =(V,E)$ be a graph that admits a covering by induced graphs $G_{1}, \dots , G_{k}.$ Then,
$$ \I[G] \leq \I[G_{1}] + \dots + \I[G_{k}].$$
\end{lem}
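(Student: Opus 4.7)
The plan is to take an arbitrary $f \in Q_G$ achieving (or approaching) $\I[G]$ and show that its total influence is bounded by $\sum_j \I[G_j]$. The key conceptual step is to observe that restricting a QTF supported on $G$ to the vertex set of an induced subgraph yields a QTF supported on that induced subgraph. Specifically, if $f(x) = \sgn\bigl(\sum_{i<j} a_{ij} x_i x_j + \sum_i a_i x_i + a_0\bigr)$ with $\{(i,j) : a_{ij} \neq 0\} \subseteq E(G)$, and if $J = V(G_\ell)$ for some $\ell$, then for any fixing $z \in \{-1,1\}^{V \setminus J}$ the restriction $f_{J|z}$ is a QTF whose quadratic terms $a_{ij} x_i x_j$ survive only when \emph{both} $i,j \in J$ (terms with exactly one index outside $J$ collapse into linear terms, and terms with both indices outside become constants). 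Since $G_\ell$ is an \emph{induced} subgraph, these surviving quadratic terms correspond precisely to edges of $G_\ell$, so $f_{J|z} \in Q_{G_\ell}$.

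Next, I would apply Lemma \ref{infrestriction} with $J = V(G_\ell)$: for any $i \in V(G_\ell)$,
\begin{equation*}
\Inf_i[f] = \expect_{z \in \{-1,1\}^{V \setminus V(G_\ell)}} \bigl[\Inf_i[f_{V(G_\ell)|z}]\bigr].
\end{equation*}
To harness this, I would choose an assignment $\sigma : V \to [k]$ with $i \in V(G_{\sigma(i)})$ for every $i$, which exists because $G_1, \ldots, G_k$ cover $V$. Setting $V_j := \sigma^{-1}(j) \subseteq V(G_j)$ partitions $V$, and
\begin{equation*}
\I[f] = \sum_{j=1}^k \sum_{i \in V_j} \Inf_i[f] = \sum_{j=1}^k \sum_{i \in V_j} \expect_{z} \bigl[\Inf_i[f_{V(G_j)|z}]\bigr].
\end{equation*}

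Finally, pushing the sum over $i \in V_j$ inside the expectation and enlarging it to all of $V(G_j)$,
\begin{equation*}
\sum_{i \in V_j} \Inf_i[f_{V(G_j)|z}] \;\le\; \sum_{i \in V(G_j)} \Inf_i[f_{V(G_j)|z}] \;=\; \I[f_{V(G_j)|z}] \;\le\; \I[G_j],
\end{equation*}
where the last inequality uses that $f_{V(G_j)|z} \in Q_{G_j}$. Taking expectations and summing in $j$ yields $\I[f] \le \sum_{j=1}^k \I[G_j]$, and since $f \in Q_G$ was arbitrary the conclusion $\I[G] \le \sum_j \I[G_j]$ follows.

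I do not expect any substantive obstacle. The one place that requires a careful sentence is justifying that the restriction lands in $Q_{G_\ell}$ rather than merely in some QTF class; this relies crucially on $G_\ell$ being an \emph{induced} subgraph of $G$ (so that every quadratic term of $f$ with both indices in $V(G_\ell)$ corresponds to an edge of $G_\ell$), which is exactly the hypothesis of the lemma.
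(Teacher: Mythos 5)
Your proof is correct and follows essentially the same route as the paper: restrict $f$ to the vertex set of each $G_j$, apply Lemma \ref{infrestriction}, note that the restriction lies in $Q_{G_j}$, and sum. The only cosmetic difference is that the paper dispatches overlaps among the $G_i$ with a terse ``without loss of generality, assume the $G_i$ are disjoint,'' whereas you make this precise by choosing the assignment $\sigma$ and partitioning $V$ into the $V_j$, and you also spell out why $f_{V(G_\ell)|z}\in Q_{G_\ell}$ (using that $G_\ell$ is \emph{induced}) -- both points the paper leaves implicit.
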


\begin{proof}
Without loss of generality, assume that the graphs $G_{i}$ are disjoint. Let $f:\{-1,1\}^{n}\rightarrow \{-1,1\}$ be a QTF supported on $G$. By Lemma \ref{infrestriction},

$$ \sum_{j \in G_{i}}\Inf_{j}[f] = \expect_{z \sim \{-1,1\}^{|G_{i}^{c}|}}[\I[f_{G_{i}|z}]]. $$ 
Since $f_{G_{i}|z}$ is supported on $G_{i}$ for any $z \in \{-1,1\}^{|G_{i}^{c}|},$
$$ \expect_{z}[\I[f_{G_{i}|z}]] \leq \I[G_{i}], $$
and so 
$$ \I[f] = \sum_{i=1}^{k} \sum_{j \in G_{i}} \Inf_{j}[f] \leq  \I[G_{1}] + \dots + \I[G_{k}]. $$
\end{proof}

\begin{comment}
\subsubsection{Vertex Cover}

\begin{thm}
 Let $G=(V,E)$ have a $O(\sqrt{n})$-vertex cover. Then, the function $g$ corresponding to $G$ has $I[g] \le O(\sqrt{n})$
\end{thm}

\begin{proof}
For ease of notation, we let $V$ be $[n]$, where $n = |V|$.

Let $U$ be a $O(\sqrt{n})$-vertex cover of $G$. Letting $\bar{U} = V\setminus U$, we can define $f = g_{\bar{U}|z}$, a restriction of $g$ acquired by fixing all the vertices in the vertex cover to $z$. For every $x_i x_j$ term in $g$ with $a_{i,j}\neq 0$, $i$ or $j$ must be in $U$, as this term corresponds to $(i,j) \in E$. Thus, $f$ can have no quadratic terms. Thus, $f$ can be written as a linear function of the form $\sgn(a_0+a_1x_1+...+a_nx_n)$. 

Then, by Theorem \ref{inf_ltf}, $\I[f] = \sum_{i \in \bar{U}} \Inf_i[f] \le O(\sqrt{n})$ for all choices of $z$. 

From Definition $\ref{infrestriction}$, we can see that for all $i \in \bar{U}$, 

\begin{align*}
\sum_{i \in \bar{U}} \Inf_i[g] =& \sum_{i \in \bar{U}} \expect_z[\Inf_i[g_{\bar{U}|z}]]\\
=&  \expect_z \left [\sum_{i \in \bar{U}} \Inf_i[g_{\bar{U}|z}]\right] \\
=&  \expect_z \left [\I[g_{\bar{U}|z}]\right] \\
\le&O(\sqrt{n})
\end{align*}
 
In addition, $0 \le \Inf_i[f] \le 1$ from Definition \ref{inf_i}, giving us
\[\sum_{i \in U} \Inf_i[g] \le |U| \le O(\sqrt{n}) \] 
Therefore, we can show
\begin{align*}
\I[g] = \sum_{i \in V} \Inf_i[g] = \sum_{i \in U} \Inf_i[g] + \sum_{i \in \bar{U}} \Inf_i[g] \le O(\sqrt{n})
\end{align*}
as desired.
\end{proof}
\end{comment}

\begin{cor}
For any $f\in Q_G$ for n-vertex graphs G with chromatic number $\chi(G)$,
$$\I[f] \le \sqrt{\chi(G)}\sqrt{n}$$
in fact, we can also show 
$$\I[G] \le \sqrt{2/\pi}\sqrt{\chi(G)}\sqrt{n}+O(\chi(G))$$
\end{cor}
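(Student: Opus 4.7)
The plan is to combine a proper coloring of $G$ with the Covering Lemma and Lemma \ref{maj_G}. Fix a proper coloring of $G$ using $\chi(G)$ colors, and let $V_{1}, \ldots, V_{\chi(G)}$ denote the resulting color classes, with $n_{i} := |V_{i}|$. Each color class is an independent set, so the induced subgraph $G_i := G[V_i]$ is edgeless. The $G_i$ cover $V$, and since they are induced, the Covering Lemma applies directly to give
$$\I[G] \;\le\; \sum_{i=1}^{\chi(G)} \I[G_i].$$

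Next, since each $G_i$ is edgeless on $n_i$ vertices, Lemma \ref{maj_G} gives $\I[G_i] = \sqrt{2/\pi}\sqrt{n_i} + O(1)$, so
$$\I[G] \;\le\; \sqrt{2/\pi}\sum_{i=1}^{\chi(G)} \sqrt{n_i} \;+\; O(\chi(G)).$$
By Cauchy–Schwarz applied to the vectors $(1,\ldots,1)$ and $(\sqrt{n_1},\ldots,\sqrt{n_{\chi(G)}})$, together with $\sum_i n_i = n$,
$$\sum_{i=1}^{\chi(G)} \sqrt{n_i} \;\le\; \sqrt{\chi(G)}\cdot\sqrt{\sum_{i=1}^{\chi(G)} n_i} \;=\; \sqrt{\chi(G)\, n},$$
which yields the sharper claim $\I[G] \le \sqrt{2/\pi}\sqrt{\chi(G)}\sqrt{n} + O(\chi(G))$.

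For the looser bound $\I[f] \le \sqrt{\chi(G)}\sqrt{n}$, one can either absorb the $O(\chi(G))$ error into the leading constant (noting $\sqrt{2/\pi} < 1$ leaves slack), or check directly using the explicit value $\I[\MAJ_n] = n\binom{n-1}{\lfloor (n-1)/2\rfloor}/2^{n-1} \le \sqrt{n}$ for every $n$, so that Lemma \ref{maj_G} may be applied in the form $\I[G_i] \le \sqrt{n_i}$ without any additive error, and then Cauchy–Schwarz finishes as above.

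There is essentially no hard step here: the Covering Lemma does all the heavy lifting by reducing the problem on $G$ to a collection of problems on edgeless graphs, where the sharp bound for LTFs is known. The only minor subtlety is being careful that $\chi(G)$ is the \emph{chromatic number}, so that the color classes are genuinely independent and the induced subgraphs are edgeless; were one to try the same argument with a non-proper partition, the inner bound via Lemma \ref{maj_G} would fail. The same plan yields the fractional chromatic version (Theorem \ref{fracch}) by replacing the integer coloring with a fractional cover by independent sets and weighting the Covering Lemma accordingly.
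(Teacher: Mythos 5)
Your proof is correct and follows essentially the same route as the paper: decompose by color classes, apply the Covering Lemma, invoke Lemma \ref{maj_G} on each edgeless piece, and finish with Cauchy--Schwarz. You are in fact slightly more careful than the paper about the cleaner bound $\I[f]\le\sqrt{\chi(G)}\sqrt{n}$ (where one should indeed use $\I[\MAJ_m]\le\sqrt{m}$ from the LTF influence theorem rather than try to absorb the $O(\chi(G))$ term, since the latter need not be dominated when $\chi(G)$ is comparable to $n$); the paper glosses over this point.
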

\begin{proof}
Let $G_1,...G_{\chi(G)}$ be the monochromatic sets formed when $\chi(G)$-coloring $G$. Each $G_i$ is an independent set, and thus $f$ is an LTF for all $f \in G_i$. Therefore, $\I[G_i] = \I[\MAJ_{|G_i|}]$. Then, from the covering lemma, we find

$$\I[G]\le \sum_{i=1}^{\chi(G)}\I[\MAJ_{|G_i|}] \le \sqrt{2/\pi}\sum_{i=1}^{\chi(G)}\sqrt{|G_i|} +O(\chi(G))$$

As $\sum_i G_i = n$, by Cauchy-Schwarz, we can show 
$$\sum_{i=1}^{\chi(G)} \sqrt{|G_i|}\le \sqrt{\chi(G)}\sqrt{n}.$$

This gives us 
$$\I[G] \le \sqrt{2/\pi}\sqrt{\chi(G)}\sqrt{n}+O(\chi(G))$$
proving the corollary.
\end{proof}
\begin{rmk}
This corollary, combined with the facts that planar graphs are 4-colorable and that a graph with maximum degree $\Delta$ has $\chi(G)\le \Delta +1$, we can prove that $\mathrm{I}[G] \le O(\sqrt{n})$ for any planar graphs or graphs with bounded degree.
\end{rmk}

We can improve the Covering Lemma by allowing a non-uniform probability distribution on the elements of the cover.

\begin{lem}[Randomized Covering Lemma]
Let $G =(V,E)$ be a graph with induced cover $C = \{G_{1}, \cdots, G_{k}\}$ and let $P$ be a probability distribution on covering $C$ such that $ \Pr_{G_{i} \sim P}[v \in G_{i}] \geq \epsilon, \forall v \in V. $ Then,
$$ \I[G] \leq \frac{1}{\epsilon} \cdot \expect_{G_{i} \sim P}[\I[G_{i}]].$$
\end{lem}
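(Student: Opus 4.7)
The plan is to mimic the proof of the (deterministic) Covering Lemma, using Lemma \ref{infrestriction} locally on each $G_i$ and then averaging over $P$. The key observation is that the only role the cover played in the earlier proof was to decompose $\I[f] = \sum_i \sum_{v \in G_i} \Inf_v[f]$; when the $G_i$'s are drawn randomly, the relevant identity becomes a weighted sum with weights equal to the inclusion probabilities $\Pr_{G_i \sim P}[v \in G_i]$, and the hypothesis $\Pr_{G_i \sim P}[v \in G_i] \ge \epsilon$ is exactly what lets us convert this weighted sum back into $\I[f]$ (up to a $1/\epsilon$ factor).

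Concretely, fix any $f \in Q_G$. First, for each fixed $G_i$ in the support of $P$, I apply exactly the argument from the deterministic Covering Lemma: since $G_i$ is induced, for every setting $z$ of the coordinates outside $G_i$, the restriction $f_{G_i|z}$ is still a QTF supported on $G_i$, so $\I[f_{G_i|z}] \le \I[G_i]$; averaging over $z$ via Lemma \ref{infrestriction} gives
$$\sum_{v \in G_i} \Inf_v[f] \;=\; \expect_{z \sim \{-1,1\}^{|G_i^c|}}\bigl[\I[f_{G_i|z}]\bigr] \;\le\; \I[G_i].$$
Second, I take the expectation of both sides over $G_i \sim P$ and swap the sum and the expectation on the left:
$$\expect_{G_i \sim P}\!\Bigl[\,\sum_{v \in G_i} \Inf_v[f]\Bigr] \;=\; \sum_{v \in V} \Pr_{G_i \sim P}[v \in G_i]\cdot \Inf_v[f] \;\ge\; \epsilon \sum_{v \in V} \Inf_v[f] \;=\; \epsilon\cdot \I[f].$$
Combining the two displays yields $\epsilon \cdot \I[f] \le \expect_{G_i \sim P}[\I[G_i]]$, which rearranges to the claimed bound. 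Since $f \in Q_G$ was arbitrary, the same bound holds for $\I[G]$.

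There is no real obstacle here: the only subtle point is being careful about the difference between fixing a deterministic index $i$ (where Lemma \ref{infrestriction} is applied) and averaging over the random choice of $G_i \sim P$ (which happens after Lemma \ref{infrestriction} has already been used to bound the inner sum). Once that order of operations is set up correctly, the proof is a one-line application of linearity of expectation plus the lower bound on the vertex-inclusion probabilities. Note also that the deterministic Covering Lemma is the special case where $P$ is uniform on a cover in which every vertex appears at least once, so $\epsilon = 1/k$ and $\expect[\I[G_i]] = \tfrac{1}{k}\sum_i \I[G_i]$ recovers the earlier statement.
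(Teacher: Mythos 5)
Your proof is correct and is exactly the argument the paper has in mind (the paper only remarks that ``the proof follows the same approach as that of the Covering Lemma'' without spelling out the details). You correctly apply Lemma \ref{infrestriction} to each fixed induced $G_i$ to get $\sum_{v \in G_i}\Inf_v[f] \le \I[G_i]$, then use linearity of expectation and the inclusion-probability hypothesis to recover $\epsilon\cdot\I[f]$ on the left; note that unlike the deterministic version this argument needs no disjointness reduction, since the weighting by $\Pr_{G_i\sim P}[v\in G_i]$ handles overlap automatically.
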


The proof follows the same approach as that of the Covering Lemma. This improved lemma allows us to change the chromatic number bound into one involving the fractional chromatic number of $G$.

\begin{defn} \label{fract}
Let $G=(V,E)$ be a graph and let $\mathcal{I}$ denote the collection of independent sets in $G$. The fractional chromatic number $\chi_{f}(G)$ is the optimal value of the linear program 

\begin{align*}
\mathrm{min} &\sum_{S \in \mathcal{I}} x_{S} \\
\mathrm{s.t.} &\sum_{S \ni v } x_{S} \geq 1, \forall v \in V
\end{align*}
\end{defn}

\subsection*{Proof of Theorem \ref{fracch}}

If we scale the optimal solution of the LP by $\chi_{f}(G)^{-1}$, we obtain a probability distribution $P $ on the independent sets of $G$ such that $\forall v \in V,$ $$ p_{v} := \Pr_{S \sim P}[v \in S] \geq \chi_{f}(G)^{-1}.$$ By the Randomized Covering Lemma,

$$ \I[f] \leq \chi_{f}(G) \cdot \expect_{S \sim P}[\I[S]]. $$

Since $S$ is an independent set of $G$, if $f \in Q_{S}$, then $f_{S|z}$ is an LTF for any $z \in \{-1,1\}^{|S^{c}|}$ and so
$$ \I[f] \leq \chi_{f}(G) \cdot \expect_{S \sim P}[\sqrt{|S|}] \leq \chi_{f}(G) \cdot \sqrt{\expect_{S \sim P}[|S|]}. $$

Moreover, let $q_{v}$ be the value in [0,1] such that $p_{v} \cdot (1-q_{v}) = \chi_{f}(G)^{-1}$. We construct a new probability distribution $\tilde{P}$ by, for every independent set $S$, removing each element in $S$ with probability $q_{v}$. Using distribution $\tilde{P}$, $$ \Pr_{S \sim \tilde{P}}[v \in S] = p_{v} \cdot (1-q_{v}) = \chi_{f}(G)^{-1},$$ 
and so $$ \expect_{S \sim \tilde{P}}[|S|] = \sum_{v \in V} \Pr_{S \sim \tilde{P}}[v \in S] = n \cdot \chi_{f}(G)^{-1}, $$ as desired.
\qed \\

To prove Theorem \ref{edgebound}, we'll borrow some simple ideas from \cite{Harsha}, which were used therein to prove an influence bound of $O(n^{1-\frac{1}{2^{d}}})$ for degree-$d$ PTFs.

\subsection*{Proof of Theorem \ref{edgebound}}
Let $f(x) = \sgn(p(x))$, where the quadratic polynomial $p(x) \neq 0$ for all $x \in \{-1,1\}^{n}$. For some fixed $x$, if $D_{i}f(x) > 0 $, then $ p(x^{i \rightarrow 1}) > p(x^{i \rightarrow -1}) $ and so $D_{i}f(x) = \sgn(D_{i}p(x))$ whenever the left side is nonzero. Hence,

$$\Inf_{i}[f] = \expect_{x \sim \{-1,1\}^{n}}[|D_{i}f|] \\
= \expect_{x}[\sgn(D_{i}p)\cdot D_{i}f].$$

Moreover, using the fact that $f = x_{i}\cdot D_{i}f + E_{i}$,
\begin{align*}
\expect_{x}[x_{i}\cdot \sgn(D_{i}p(x)) \cdot f] &= \expect_{x}[\sgn(D_{i}p(x)) \cdot D_{i}f] + \expect_{x}[x_{i} \cdot \sgn(D_{i}p(x)) \cdot E_{i}] \\ &= \expect_{x}[\sgn(D_{i}p(x)) \cdot D_{i}f] + \expect_{x}[x_{i}]\expect_{x}[\sgn(D_{i}p(x)) \cdot E_{i}],
\end{align*}
where the right term is zero as $E_{i}$ and $\sgn(D_{i}p(x)) $ are independent from $x_{i}$. Thus,

$$ \Inf_{i}[f] = \expect_{x}[x_{i} \cdot \sgn(D_{i}p(x))  \cdot f].$$
Using Cauchy-Schwarz,

\begin{align*}
\I[f] = \sum_{i \in [n]} \expect_{x}[x_{i} \cdot \sgn(D_{i}p) \cdot f] &\leq \expect_{x}[|\sum_{i\in [n]}x_{i} \cdot \sgn(D_{i}p)|]\\
&\leq \sqrt{n + \sum_{i \neq j}\expect_{x}[x_{i}x_{j} \cdot \sgn(D_{i}p) \cdot \sgn(D_{j}p)]}.
\end{align*}

For ease of notation, let $f_{i}(x) := \sgn(D_{i}p(x))$. Since $f_{i}$ does not depend on $x_{i}$, 

\begin{align*}
\expect_{x \sim \{-1,1\}^{n}}[x_{i}x_{j}f_{i}f_{j}] &= \expect_{x}\left[x_{j}f_{i}\left(\frac{f_{j}(x^{i \rightarrow 1})-f_{j}(x^{i \rightarrow -1})}{2} \right)\right] \\
&= \expect_{x}\left[x_{j}f_{i}\cdot D_{i}f_{j}\right].
\end{align*}

Similarly, since $D_{j}f_{i}$ does not depend on $x_{j}$,  we can repeat the same process with $f_{i}$ to obtain
$$ \expect_{x}[x_{i}x_{j}f_{i}f_{j}] = \expect_{x}\left[D_{j}f_{i}\cdot D_{i}f_{j}\right]. $$

Moreover, using Cauchy-Schwarz,

\begin{align*}
\expect_{x}[ D_{j}f_{i}\cdot D_{i}f_{j}] &\leq \sqrt{\expect_{x}[ D_{j}f_{i}^{2}]\expect_{x}[D_{i}f_{j}^{2}]} =\sqrt{\Inf_{j}[f_{i}]\cdot \Inf_{i}[f_{j}]}\\
&\leq \frac{\Inf_{j}[f_{i}]+ \Inf_{i}[f_{j}]}{2},
\end{align*}

where the last step follows by the AM-GM inequality. Therefore,

$$ \I[f] \leq \sqrt{n + \sum_{i\in [n]}\I[\sgn(D_{i}p)]}.$$
Since $p(x) = \sum_{i < j} a_{i,j} x_i x_j + \sum_{i}b_{i}x_{i} + c$, we have that, for a fixed $i$, 
 $$ \sgn(D_{i}p(x)) = \sgn\left(\sum_{j|(i,j) \in E)} a_{i,j} x_j + b_{i}\right),$$ so $ \sgn(D_{i}p(x)) $ is an LTF on the variables adjacent to $x_i$ in $G$ and therefore $$ \I[\sgn(D_{i}p(x))] \leq \sqrt{\deg(i)}.$$ Hence, by Cauchy-Schwarz, $$\I[f] \leq \sqrt{n + \sum_{i\in [n]} \sqrt{\deg(i)}} \leq \sqrt{n + \sqrt{2|E|n}}.$$ \qed 

\textbf{Remark:} Observe that by taking $G_1$ in the covering lemma to be the induced subgraph on any set of vertices we wish to ignore, we can apply Theorem 1 or Theorem 2 to $G \setminus G_1$ and obtain
$$\I[G] \leq \min\{\sqrt{n+ \sqrt{2|E(G\setminus G_1)|}}, \sqrt{\chi_f(G\setminus G_1)}\cdot \sqrt{n}\} + \I[G_1]$$
If, in particular, $|G_1| < n/\log^C(n)$ for large enough $C > 0$, then Kane's bound yields $\I[G_1] \leq \sqrt{n}$, and so in proving a function satisfies the weak Gotsman-Linial conjecture, one may throw away as many as $n/\log^C(n)$ variables. 

\section{Counterexamples to the Gotsman-Linial Conjecture} \label{counterexample}

For $n \leq 4$, the number of boolean functions on $n$ variables is $2^{2^n} \leq 2^{16}$, so an exhaustive search is tractable. In those cases, our exhaustive search verified that the Gotsman-Linial conjecture is true for $d =2, n \leq 4$.

For $n = 5$, we searched only through truth tables of functions symmetric in the last two coordinates (reducing the size of the search space from $2^{32}$ to $2^{24}$), first screening for high influence\footnote{Since an average function has influence
$\expect_{f}[\I[f]] = 2.5$, standard concentration bounds imply that most functions will fail to pass this screening.} ($>\I_{GL}(5,2) = 3.125$) before testing for QTF-ness. 

To test whether a boolean function $f$ can be represented as a QTF, we simply test for feasibility of the following LP:

\begin{align*}
&\text{find } q \in \R^{16} \\
&\text{such that } f_x (Tq)_x \ge 1, \, \, \forall x \in \{-1,1\}^5
\end{align*}

where the linear map $T: \R^{16} \rightarrow \R^{32}$ is the evaluation map on the space of multilinear quadratic functions on $\{-1,1\}^5$. If this LP is feasible, then any solution yields the coefficients of a quadratic polynomial $q$ such that at all of the inputs $x \in \{-1,1\}^5$, $\sgn(q(x)) = f(x)$.
This search produces the counterexample 
\begin{align*}
f(x)=\sgn(-7x_1x_2 -7x_1x_3 + x_1x_4 + x_1x_5 -7x_2x_3 + 2x_2x_4 + 2x_2x_5 + 2x_3x_4 \\+ 2x_3x_5 + 3x_4x_5 + x_1 + 2x_2 + 2x_3 + 3x_4 + 3x_5 -4)
\end{align*}
with influence $3.1875 > \I_{GL}(5,2).$

Observe that this QTF can be written as
$$\sgn(p(x_1, x_2 + x_3, x_4 + x_5))$$ for
\be \nn
p(x, y, z) := 2x(1 - 7y + z) + 4 y - 7y^2 + 4yz + 6z + 3z^2. 
\ee
On 7 variables, the function 
$$\sgn(p(x_1, x_2 + x_3 + x_4 + x_5, x_6 + x_7))$$
also has high influence: $249/64 > 245/64 = \I_{GL}(7, 2).$ In fact, we now prove that the functions obtained in this way:
$$f_n(x):=\sgn\left(p\left(x_1,\sum_{i=2}^{n-2} x_i, x_{n-1}+x_n\right)\right)$$
always have higher influence than $\I_{GL}(n,2)$ for odd $n \geq 5$.

\subsection*{Proof of Theorem \ref{generalize}}
Let $n \geq 9$ be odd and set $m= n-3$. By considering the 6 possible functions obtained from $f_n$ by restricting the values of $x_1$ and $x_{n-1} + x_n$, we can find the influence of the remaining $n-3$ coordinates directly by counting, and then appeal to Lemma \ref{infrestriction} to conclude that 
\begin{align*}
\sum_{i=2}^{n-2}\Inf_i[f_n] &= 2^{-m}\left[ \binom{m}{m/2}\cdot \frac{11m}{16} + \binom{m}{m/2 -1}\left(\frac{15m}{16}+ \frac{7}{8}\right) \right.\\ 
&\left. + \binom{m}{m/2 - 2}\left(\frac{5m}{16} + \frac{3}{4}\right) + \binom{m}{m/2 - 3}\left(\frac{m+6}{15}\right) \right]
\end{align*}

The influences of $x_1$, $x_{n-1}$ and $x_n$ can also be obtained directly:

$$\Inf_1[f_n] = 2^{-m}\left[ \binom{m}{m/2}\cdot \frac{3}{4} + \binom{m}{m/2 -1} \cdot \frac{5}{4} + \binom{m}{m/2 -2}\cdot \frac{1}{4} \right]$$

$$\Inf_{n-1}[f_n] = \Inf_{n}[f_n] = 2^{-m}\left[ \binom{m}{m/2}\cdot \frac{3}{4} + \binom{m}{m/2 -1} \cdot \frac{3}{4} + \binom{m}{m/2 -2}\cdot \frac{1}{4} \right]$$

Adding these expressions together yields

\begin{align*}
\I[f_n] = &2^{-m}\left[\binom{m}{m/2}\left(\frac{11m}{16} + \frac{9}{4}\right) + \binom{m}{m/2 -1}\left(\frac{15m}{16} + \frac{29}{8}\right) \right.\\ 
&\left.+ \binom{m}{m/2-2}\left(\frac{5m}{16} + \frac{3}{2}\right) + \binom{m}{m/2 - 3}\left(\frac{m}{16} + \frac{3}{8}\right)\right]
\end{align*}
while by Lemma \ref{real GL},
$$\I_{GL}(2, n) = 2^{-m}\binom{m+3}{m/2 + 1}\left(\frac{m+3}{4} \right).$$
Computing the ratio, one finds

$$\frac{\I[f_n]}{\I_{GL}(2,n)} = 1+ \frac{7}{32n} - \frac{3}{32(n-2)}+ \frac{3}{16 n^2} = 1 + \frac{1}{8n} + O\mathopen{}\left(\frac{1}{n^2}\right). $$ 
\qed\\

Thus, $\{f_n\}_{n\geq 5, \text{ odd}}$ provides an infinite family of counterexamples to the Gotsman-Linial conjecture for $d = 2$, beating the supposed upper bound by an additive $\Omega(1/\sqrt{n})$.

\subsection{Discussion}
We note that all of the counterexamples discovered have a complete support. This raises a natural question: can a graph with incomplete support still have influence greater than $\I_{GL}$? 

Let $H$ and $G$ be graphs, with $H \subseteq G$. Then, as we can change each coefficient of a PTF by a small amount without altering its truth table, we of course have $\I[h] \le \I[g]$. However, it is not obvious when equality should hold. Suppose $G_0 \subseteq G_1 \subseteq \cdots \subseteq G_m = G$ is a chain of subgraphs of $G$, starting from the empty graph $G_0$ and adding one edge at a time. If $G = K_n$, then since $\I[G] - \I[G_0] = \tilde{\Theta}(\sqrt{n})$, the average gap $\I[G_{i+1}] - \I[G_i]$ is $\tilde{\Theta}(n^{-3/2})$, where the $\tilde{\Theta}$ hides the possible polylogarithmic factors from Kane's upper bound. For a random sequence of edges, are the gaps likely to be evenly distributed, or does removing some small, special subset of edges account for most of the drop in influence? More generally, for non-isomorphic graphs $G$ and $G'$, what can be said about $\I[G]$ versus $\I[G']$?

In order to gain some intuition, we decided to explicitly find the maximum influence attained by a function with each support. As there are only $2^{2^4}$ boolean functions on four variables, we used the method outlined in the previous subsection to find these values. The results are shown in Table \ref{graphdata}.

We can see that the maximal influence of 3.0 was only attained by a function with full support. As this also appeared to be the case in our search for a counterexample in the case of odd $n$ for $d=2$, it seems reasonable to conjecture that for each $n$ and $d$, the influence maximizer will have full support. As a stronger conjecture, it seems feasible that any function violating Gotsman-Linial's influence bound must have full support. 

\begin{table}[!ht]
\centering\renewcommand{\arraystretch}{2}
\begin{tabu}{ccrccrccrcc}
\tabucline[2pt]{-}
$G$ & $\I[G]$ && $G$ & $\I[G]$&& $G$ & $\I[G]$ && $G$ & $\I[G]$\\
\tabucline[1pt]{-}
\quadgraph{(1)--(2) (3)--(4)  }&2&
&\quadgraph{(1)--(2) (1)--(4) (2)--(3) (3)--(4) }&2 &
&\quadgraph{(1)--(2) (1)--(3) (2)--(3) (3)--(4) }&2.5 &
&\quadgraph{(1)--(2) (1)--(3) (1)--(4) (2)--(3) (2)--(4) (3)--(4) }&3
 \\ 
\quadgraph{(1)--(2) (2)--(3) (3)--(4)}&2 &
&\quadgraph{(1)--(4) (2)--(4) (3)--(4)}&2.5&
&\quadgraph{(1)--(3) (1)--(4) (2)--(3) (2)--(4) (2)--(4) (3)--(4) }&2.5
&  & \\ \hline
\tabucline[2pt]{-}
\end{tabu}
\caption{Maximum influences of QTFs supported by graphs on 4 vertices}
\label{graphdata}
\end{table}

\section{Acknowledgements}
The authors would like to thank the organizers of SPUR+ program at MIT for funding this research. In particular, we thank Cris Negron and Henry Cohn for their guidance. 

\newpage
\appendix

\section{Fourier Analysis}\label{fourieranalysis}

\begin{prop}[Fourier Expansion] \label{fourier}
Let $f:\{-1,1\}^{n}\rightarrow \R$ be a real Boolean function. Then, there exists an unique representation as a multilinear polynomial
$$ f(x)= \sum_{S \subset [n]} \hat{f}(S) \cdot x_{S}, $$ where $x_{S} = \Pi_{i \in S}x_{i}.$
\end{prop}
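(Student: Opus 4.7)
The plan is to prove existence and uniqueness separately: existence by explicit interpolation through indicator polynomials, and uniqueness by a dimension count backed by the orthonormality of the monomials $x_S$.

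For existence, I would exploit the fact that on $\{-1,1\}^n$ we have $x_i^2 = 1$, so any polynomial can be reduced to a multilinear one. For each point $a \in \{-1,1\}^n$, define
$$\mathbb{1}_a(x) := \prod_{i=1}^n \frac{1 + a_i x_i}{2}.$$
Expanding the product and using $x_i^2 = 1$ to reduce higher powers yields a multilinear polynomial. Direct substitution shows $\mathbb{1}_a(a) = 1$ and $\mathbb{1}_a(b) = 0$ for any $b \neq a$ in $\{-1,1\}^n$ (since some factor becomes zero). Thus
$$f(x) = \sum_{a \in \{-1,1\}^n} f(a)\cdot \mathbb{1}_a(x)$$
agrees with $f$ on the whole cube and is multilinear; collecting monomials then produces coefficients $\hat f(S)$ indexed by $S \subseteq [n]$.

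For uniqueness, I would argue by dimension. The space $V$ of functions $\{-1,1\}^n \to \R$ has dimension $2^n$, and there are exactly $2^n$ multilinear monomials $\{x_S : S \subseteq [n]\}$. It therefore suffices to show that these monomials are linearly independent in $V$. I would equip $V$ with the inner product $\langle f, g\rangle := \expect_{x \sim \{-1,1\}^n}[f(x)g(x)]$ and verify the orthonormality relation
$$\langle x_S, x_T \rangle = \expect_x[x_{S \triangle T}] = \prod_{i \in S \triangle T} \expect[x_i] = \mathbf{1}[S = T],$$
where the middle equality uses independence of the coordinates and the last uses $\expect[x_i] = 0$. Orthonormality implies linear independence, so $\{x_S\}_{S \subseteq [n]}$ is a basis of $V$, and the coefficients in the expansion of $f$ are uniquely determined by $\hat f(S) = \langle f, x_S\rangle$.

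There is essentially no obstacle here beyond bookkeeping; the only substantive ingredient is the identity $\expect_x[x_S] = \mathbf{1}[S = \emptyset]$, which is immediate from the independence of the coordinates $x_i$ under the uniform distribution on $\{-1,1\}^n$. Everything else follows from finite-dimensional linear algebra.
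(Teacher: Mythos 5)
Your proof is correct, and it tracks the paper's argument closely on the uniqueness side: the paper's proof is essentially just your second half — observe that $V$ has dimension $2^n$, define the inner product $\langle f,g\rangle = \expect_x[f(x)g(x)]$, and note that the $2^n$ monomials $x_S$ are orthonormal, hence a basis, which yields both existence and uniqueness at once. Your separate constructive existence argument via the interpolation polynomials $\mathbb{1}_a(x) = \prod_i \frac{1+a_ix_i}{2}$ is therefore logically redundant once orthonormality plus the dimension count are in hand, but it is a clean self-contained alternative: it exhibits the expansion explicitly without invoking the inner-product structure, and it makes transparent \emph{why} multilinear polynomials suffice (namely $x_i^2 = 1$ on the cube), a point the paper glosses over. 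Both routes hinge on the same substantive fact, $\expect_x[x_S] = \mathbf{1}[S=\emptyset]$, which you state and justify; the paper merely asserts orthonormality without spelling that out. In short: same key lemma, but you supply an extra constructive existence proof that the paper folds implicitly into the basis claim.
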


\begin{proof}
Note that we can interpret the set of functions as a vector space in $\R^{2^{n}}$ as we can add functions point-wise and multiply by a scalar. By defining the inner-product, $$ \left<f,g\right> = \expect_{x \in \{-1,1\}^{n}}[f(x)\cdot g(x)], $$ the characters $x_{S}$ form an orthonormal basis.
\end{proof}

This representation is particularly useful as it allows us to ``read off'' properties of the function. In particular, the Fourier expansion allows us to show that the discrete derivative operator behaves like in the continuous case.

\begin{prop}
Let $f:\{-1,1\}^{n}\rightarrow \{-1,1\}$ be a Boolean function with Fourier expansion $ f(x) = \sum_{S \subset [n]} \hat{f}(S) \cdot x_{S} $. Then, 
$$ D_{i}f(x) = \sum_{S \ni i} \hat{f}(S) \cdot x_{S \backslash \{i\}}. $$
\end{prop}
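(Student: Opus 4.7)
The plan is to exploit linearity of the discrete derivative operator $D_i$ and then reduce the problem to computing $D_i$ on a single monomial $x_S = \prod_{j \in S} x_j$. Since $D_i$ is defined as $D_i f(x) = \tfrac{1}{2}\bigl(f(x^{i \to 1}) - f(x^{i \to -1})\bigr)$, it is clearly $\R$-linear in $f$, so the Fourier expansion $f = \sum_S \hat f(S)\, x_S$ gives $D_i f = \sum_S \hat f(S)\, D_i x_S$, and it suffices to determine $D_i x_S$ for each $S \subseteq [n]$.

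I would split into two cases. If $i \notin S$, then $x_S$ does not involve $x_i$, so $x_S^{i \to 1} = x_S^{i \to -1}$ and $D_i x_S = 0$, matching the claim that such terms do not contribute to the sum over $S \ni i$. If instead $i \in S$, I factor $x_S = x_i \cdot x_{S\setminus\{i\}}$; then $x_S^{i \to 1} = x_{S\setminus\{i\}}$ and $x_S^{i \to -1} = -x_{S\setminus\{i\}}$, so $D_i x_S = \tfrac{1}{2}\bigl(x_{S\setminus\{i\}} - (-x_{S\setminus\{i\}})\bigr) = x_{S\setminus\{i\}}$.

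Plugging these back in, only the terms with $i \in S$ survive, and each contributes $\hat f(S)\, x_{S\setminus\{i\}}$, yielding $D_i f(x) = \sum_{S \ni i} \hat f(S)\, x_{S\setminus\{i\}}$, as claimed. There is no real obstacle here — the uniqueness of the multilinear Fourier expansion (Proposition \ref{fourier}) ensures the computation is well-defined, and the rest is a one-line calculation per case. The only minor care needed is to write the monomial computation without ambiguity, since the indices $x_i \in \{-1,1\}$ force $x_i^2 = 1$ and hence the factorization $x_S = x_i \, x_{S\setminus\{i\}}$ used above.
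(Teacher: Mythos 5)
Your proof is correct and follows the same route as the paper's: use linearity of $D_i$ to reduce to computing $D_i x_S$ on characters, then split into the cases $i \in S$ and $i \notin S$. You simply spell out the per-monomial computation that the paper states without derivation.
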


\begin{proof}
Note that the derivative operator is linear, so it suffices to look at the characters $x_{S}$. In this case,
$$ D_{i}x_{S} = 
\begin{cases}
x_{S \backslash \{i\}}, & \text{if } i \in S \\
0, & \text{otherwise}
\end{cases}
$$ so the statement follows by Proposition \ref{fourier}.
\end{proof}

This result allows us to obtain an expression for the influence of a Boolean function in terms of Fourier coefficients.

\begin{prop}
Let $f:\{-1,1\}^{n}\rightarrow \{-1,1\}$ be a Boolean function with Fourier expansion $ f(x) = \sum_{S \subset [n]} \hat{f}(S) \cdot x_{S} $. Then, 
$$ \Inf_i[f] = \sum_{S \ni i} \hat{f}(S)^{2}. $$
\end{prop}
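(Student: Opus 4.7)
The plan is to combine two earlier facts: the identity $\Inf_i[f] = \expect_x[|D_i f(x)|]$, which appears in the Preliminaries, and the Fourier expansion of $D_i f$ from the previous proposition. The key observation is that because $f$ is boolean-valued, $D_i f(x) \in \{-1, 0, 1\}$ for every $x$, and hence $|D_i f(x)| = (D_i f(x))^2$. This turns the influence into a second-moment quantity that Plancherel can act on.

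First I would write
\begin{equation*}
\Inf_i[f] \;=\; \expect_{x \sim \{-1,1\}^n}\bigl[|D_i f(x)|\bigr] \;=\; \expect_{x}\bigl[(D_i f(x))^2\bigr],
\end{equation*}
justifying the second equality by the $\{-1,0,1\}$-valuedness of $D_i f$. Next I would invoke the previous proposition to expand
\begin{equation*}
D_i f(x) \;=\; \sum_{S \ni i} \hat{f}(S)\cdot x_{S \setminus \{i\}},
\end{equation*}
which expresses $D_i f$ as a multilinear polynomial in the $n-1$ variables $\{x_j : j \neq i\}$, with Fourier coefficient $\hat f(S)$ attached to the character $x_{S\setminus\{i\}}$. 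Since the characters $\{x_T : T \subseteq [n]\setminus\{i\}\}$ form an orthonormal basis under the inner product from Proposition \ref{fourier}, Plancherel's identity gives
\begin{equation*}
\expect_x\bigl[(D_i f(x))^2\bigr] \;=\; \sum_{T \subseteq [n]\setminus\{i\}} \hat{f}(T \cup \{i\})^2 \;=\; \sum_{S \ni i} \hat{f}(S)^2,
\end{equation*}
which is the claimed formula.

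There is no real obstacle here: the statement is a one-line consequence of the two previously established facts, once one notes $|D_i f| = (D_i f)^2$. The only minor point worth being careful about is applying Plancherel on the $(n-1)$-variable space where $D_i f$ naturally lives, rather than on the original $n$-variable space; this is handled automatically because the sets $T \subseteq [n]\setminus\{i\}$ are in bijection with the sets $S \ni i$ via $S = T \cup \{i\}$.
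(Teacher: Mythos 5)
Your argument is correct and follows the same route as the paper: both proofs note that $D_i f$ takes values in $\{-1,0,1\}$ so that $|D_i f| = (D_i f)^2$, turning the influence into $\expect_x[(D_i f)^2]$, and then conclude via Parseval applied to the Fourier expansion of $D_i f$. You simply make the final Parseval step explicit (carefully tracking the bijection between $T \subseteq [n]\setminus\{i\}$ and $S \ni i$), which the paper leaves implicit.
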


\begin{proof}
By Definition \ref{derivative}, note that 
$$ D_{i}f(x) = 
\begin{cases}
\pm 1, & \text{if } x_{i} \text{ is an influential coordinate} \\
0, & \text{otherwise}
\end{cases}
$$ so $(D_{i}f(x))^{2}$ is an indicator for the $i$th influence. Hence, 
$$ \Inf_i[f] = \expect_{x}[(D_{i}f(x))^{2}]= \sum_{S \ni i} \hat{f}(S)^{2}. $$ 
\end{proof}

Similar to the discrete derivative operator, we can define the $i$th expectation operator where we take the average of a function at fixed values of $x_{i}$.

\begin{defn}[Expectation Operator]
Let $f:\{-1,1\}^{n} \rightarrow \R$ be a real Boolean function. Then, the $i$th expectation operator maps $f$ to the function $E_{i}f:\{-1,1\}^{n-1}\rightarrow \set{-1,1}$ defined by
$$ E_{i}f(x) = \expect_{\bm{x_{i}} \sim \{-1,1\}} [f(x_{1}, \cdots, \bm{x_{i}}, \cdots, x_{n})] = \frac{f(x^{i\rightarrow 1})+f(x^{i\rightarrow -1})}{2}.$$
\end{defn}

\begin{prop}
Let $f:\{-1,1\}^{n}\rightarrow \{-1,1\}$ be a Boolean function with Fourier expansion $ f(x) = \sum_{S \subset [n]} \hat{f}(S) \cdot x_{S} $. Then, 
$$ E_{i}f(x) = \sum_{S|i\notin S} \hat{f}(S) \cdot x_{S}. $$
\end{prop}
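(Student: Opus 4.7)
The plan is to exploit linearity of the expectation operator $E_i$ and reduce the claim to computing $E_i$ on the monomial (character) basis. First I would note that $E_i$ is a linear operator on the vector space of functions $\{-1,1\}^n \to \R$, which is immediate from its definition as the average of two evaluations. Since the Fourier expansion (Proposition \ref{fourier}) writes $f$ uniquely as a linear combination of characters $x_S$, it suffices to compute $E_i x_S$ for each $S \subset [n]$ and then sum.

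Next I would split into two cases based on whether $i \in S$. If $i \notin S$, then $x_S = \prod_{j \in S} x_j$ has no $x_i$-dependence, so averaging over $x_i \in \{-1,1\}$ leaves it unchanged, giving $E_i x_S = x_S$. If $i \in S$, factor $x_S = x_i \cdot x_{S \setminus \{i\}}$, where the second factor is independent of $x_i$; then
$$E_i x_S \;=\; x_{S \setminus \{i\}} \cdot \tfrac{1 + (-1)}{2} \;=\; 0.$$
Thus $E_i$ acts on characters as the indicator of the event $i \notin S$.

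Finally, applying $E_i$ term-by-term to $f(x) = \sum_{S \subset [n]} \hat{f}(S)\, x_S$ and using the case analysis above kills precisely the monomials containing $i$ and preserves the rest, yielding
$$E_i f(x) \;=\; \sum_{S : i \notin S} \hat{f}(S)\, x_S,$$
as claimed. There is no real obstacle here — the argument is a direct mirror of the proof already given in the excerpt for the discrete derivative operator $D_i$, with the only change being that on the character $x_i$ itself, $E_i$ returns $0$ rather than $1$, which is why the surviving terms are those with $i \notin S$ instead of $i \in S$.
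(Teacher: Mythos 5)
Your proof is correct and follows exactly the route the paper intends: the paper omits the details, stating only that the argument mirrors the one for the discrete derivative operator $D_i$, and your proof is precisely that mirrored argument (linearity of $E_i$ plus the computation $E_i x_S = x_S$ if $i \notin S$ and $E_i x_S = 0$ if $i \in S$). Nothing is missing.
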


The proof follows the same structure as Proposition \ref{derivative}. This result allows us to rewrite a function on $n$ variables in terms of those in at most $n-1$.

\begin{prop}
Let $f:\{-1,1\}^{n}\rightarrow \R$ be a real Boolean function. Then, 
$$ f(x) = x_{i} \cdot D_{i}f(x) + E_{i}f(x). $$
\end{prop}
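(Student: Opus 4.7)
The proposition asserts a pointwise decomposition $f(x) = x_i \cdot D_i f(x) + E_i f(x)$. Since both sides are linear functions of $f$, there are two natural routes: a direct case analysis on the value of $x_i$, or a Fourier-theoretic argument leveraging the two immediately preceding propositions. I would present the former as the main proof because it is entirely self-contained and uses only the definitions just given, and then remark on the latter as a sanity check.

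The plan is as follows. First, recall that $x_i$ takes only the values $\pm 1$, so it suffices to verify the identity in each case. Fix $x \in \{-1,1\}^n$ and suppose $x_i = 1$. Plugging into the definitions of $D_i$ and $E_i$, the right-hand side becomes
\begin{equation*}
(1)\cdot\frac{f(x^{i\to 1}) - f(x^{i \to -1})}{2} + \frac{f(x^{i\to 1}) + f(x^{i \to -1})}{2} = f(x^{i\to 1}) = f(x),
\end{equation*}
where the last equality uses $x_i = 1$. The case $x_i = -1$ is symmetric: the two halves of $f(x^{i\to 1})$ cancel, the two halves of $f(x^{i\to -1})$ add, and one obtains $f(x^{i\to -1}) = f(x)$.

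For the alternative Fourier-based argument, I would split the Fourier expansion of $f$ along whether $i \in S$:
\begin{equation*}
f(x) = \sum_{S \ni i} \hat{f}(S)\, x_S + \sum_{S \not\ni i} \hat{f}(S)\, x_S = x_i \sum_{S \ni i} \hat{f}(S)\, x_{S\setminus\{i\}} + \sum_{S \not\ni i} \hat{f}(S)\, x_S.
\end{equation*}
By the preceding propositions identifying the Fourier expansions of $D_i f$ and $E_i f$, the two summands on the right are exactly $x_i\, D_i f(x)$ and $E_i f(x)$, respectively.

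There is no genuine obstacle here; the result is a one-line computation once the definitions are unpacked. The only thing to be slightly careful about is ensuring that the case analysis covers both values of $x_i$ (which it does trivially), and, if one opts for the Fourier proof, citing the preceding two propositions rather than reproving them.
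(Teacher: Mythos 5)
Your proof is correct. The paper actually states this proposition without proof, but its placement — immediately after the propositions giving the Fourier expansions of $D_i f$ and $E_i f$ — makes clear that the intended argument is the Fourier one you relegate to a ``sanity check'': split $f(x) = \sum_{S} \hat{f}(S)\,x_S$ into terms with $i \in S$ and $i \notin S$, then recognize the two pieces as $x_i D_i f(x)$ and $E_i f(x)$. Your primary argument, the direct case analysis on $x_i \in \{-1,1\}$, is a genuinely different and somewhat more elementary route: it needs only the pointwise definitions of $D_i$ and $E_i$ and avoids any mention of the Fourier basis, at the cost of not fitting as neatly into the appendix's Fourier-analytic narrative. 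Both are one-line computations and both are valid; including both, as you do, is perfectly reasonable.
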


\begin{thm}[Total influence of LTF's]
Let $f:\{-1,1\}^{n}\rightarrow \{-1,1\}$ be an LTF. Then, $\I[f] \leq \sqrt{n}$.
\end{thm}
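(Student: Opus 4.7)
The plan is to use the Fourier-analytic tools developed in the appendix together with a symmetry reduction. First, I would observe that the claim is invariant under flipping the signs of individual input variables: if $f(x) = \sgn(a_0 + \sum_i a_i x_i)$ and $g(y) = f(\sigma_1 y_1,\dots,\sigma_n y_n)$ for arbitrary signs $\sigma_i \in \{-1,1\}$, then the maps $x \mapsto (\sigma_1 x_1,\dots,\sigma_n x_n)$ are measure-preserving bijections on $\{-1,1\}^n$ that preserve each coordinate influence, so $\I[f] = \I[g]$. Choosing $\sigma_i = \sgn(a_i)$, we may therefore assume without loss of generality that $f$ is a \emph{monotone} LTF, i.e. all nonconstant coefficients are nonnegative.

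Next I would exploit monotonicity to identify each coordinate influence with a level-1 Fourier coefficient. For a monotone $f$, flipping $x_i$ from $-1$ to $+1$ can only increase $f$, so $D_i f(x) \in \{0,1\}$ pointwise and hence $|D_i f| = D_i f$. Using the Fourier expansion of the discrete derivative operator given in the appendix, $D_i f(x) = \sum_{S \ni i} \hat{f}(S) x_{S\setminus\{i\}}$, taking expectations gives
$$\Inf_i[f] = \expect_x[|D_i f(x)|] = \expect_x[D_i f(x)] = \hat{f}(\{i\}).$$

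Finally, I would sum over $i$ and apply Cauchy-Schwarz followed by Parseval's identity. Since $f$ is $\pm 1$-valued, $\sum_{S\subseteq[n]} \hat{f}(S)^2 = \expect[f^2] = 1$, so in particular $\sum_{i=1}^n \hat{f}(\{i\})^2 \leq 1$. Therefore
$$\I[f] = \sum_{i=1}^n \hat{f}(\{i\}) \leq \sqrt{n} \cdot \sqrt{\sum_{i=1}^n \hat{f}(\{i\})^2} \leq \sqrt{n},$$
as claimed. No step of this argument is genuinely difficult; the one place that requires a small amount of care is the monotonization step, since one needs to confirm that the coordinatewise sign change really does preserve each $\Inf_i$ (and not just some aggregate quantity), but this is immediate from the definition of $\Inf_i$ as a probability over a uniform input.
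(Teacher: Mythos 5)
Your proof is correct and follows essentially the same route as the paper's: identify $\Inf_i[f]$ with the magnitude of the level-one Fourier coefficient $\hat f(\{i\})$ using the sign structure of $D_i f$, then bound $\sum_i |\hat f(\{i\})|$ by $\sqrt n$ via Cauchy--Schwarz and Parseval. The only cosmetic difference is that you normalize to the monotone case up front, whereas the paper carries the sign $\pm D_i f$ through in-line; you also correctly state Parseval as the inequality $\sum_i \hat f(\{i\})^2 \leq 1$, which slightly tightens a loosely written equality in the paper.
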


\begin{proof}
(Our proof follows \cite{ABF}.) Let $f(x) = \sgn(a_{0}+\sum_{i\in [n]} a_{i}x_{i})$ with real coefficients. Note that for a fixed $i$,  $f(x^{i\rightarrow 1}) \geq f(x^{i\rightarrow -1})$ when $a_{i} > 0$ and $f(x^{i\rightarrow 1}) \leq f(x^{i\rightarrow -1})$ when $a_{i} < 0$. Hence, the discrete derivative $\pm D_{i}f(x)$ will be a 0-1 indicator for the $i$th influence, where the sign depends on which case applies. Hence,
$$ \Inf_{i}[f] = \expect_{x \sim \{-1,1\}^{n}}[\pm D_{i}f(x)] = |\hat{f}({i})| $$ so by Cauchy-Schwarz,
$$ \I[f] = \sum_{i \in [n]}|\hat{f}({i})| \leq \sqrt{n \cdot \sum_{i \in [n]} \hat{f}({i})^{2}} = \sqrt{n}, $$
where we use the fact that
$$ 1 = \expect_{x \sim \{-1,1\}^{n}}[f(x)^{2}] = \sum_{i \in [n]} \hat{f}({i})^{2}.$$
\end{proof}

\begin{lem}
 For the edgeless graph $G = (V, \emptyset)$ on $n$ vertices, $$\I[G] = \I[\MAJ_n] = (\sqrt{2/\pi}+ o_n(1))\sqrt{n}$$
\end{lem}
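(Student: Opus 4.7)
The plan has two parts, matching the two equalities asserted. First, since $G$ is edgeless, $Q_G$ is precisely the set of LTFs on $n$ variables, and so $\I[G]=\I[\MAJ_n]$ reduces to showing that $\MAJ_n$ maximizes total influence among $n$-variable LTFs. Second, I will compute $\I[\MAJ_n]$ asymptotically.

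For the first step, I would use the formula $\Inf_i[f] = |\hat{f}(\{i\})|$ proved (for LTFs) in the theorem immediately above. Since flipping the sign of any coordinate $x_i \mapsto -x_i$ leaves $\I[f]$ invariant while flipping the sign of $\hat{f}(\{i\})$, I may assume $\hat{f}(\{i\}) \geq 0$ for every $i$. Writing $\hat{f}(\{i\}) = \expect_x[f(x)x_i]$ and summing,
$$\I[f] \;=\; \sum_i \hat{f}(\{i\}) \;=\; \expect_x\left[f(x)\sum_i x_i\right] \;\leq\; \expect_x\left|\sum_i x_i\right|,$$
with equality exactly when $f(x) = \sgn\!\left(\sum_i x_i\right)$ wherever $\sum_i x_i \neq 0$. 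For odd $n$ this pins down $\MAJ_n$ as the unique extremizer; for even $n$ one picks a tie-breaking convention, which does not affect the asymptotic.

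For the second step, I would invoke the Central Limit Theorem: $(\sum_i x_i)/\sqrt{n}$ converges in distribution to a standard Gaussian $Z$, and since second moments are uniformly bounded, $\expect|\sum_i x_i|/\sqrt{n} \to \expect|Z| = \sqrt{2/\pi}$. One can also compute directly: for odd $n$, $\Inf_i[\MAJ_n] = \Pr[\sum_{j \neq i}x_j = 0] = \binom{n-1}{(n-1)/2} 2^{-(n-1)}$, and Stirling applied to the central binomial coefficient yields
$$\I[\MAJ_n] \;=\; n \cdot \binom{n-1}{(n-1)/2} 2^{-(n-1)} \;=\; (1 + o(1))\sqrt{2n/\pi}.$$
There is no serious obstacle here -- the result is classical, going back to \cite{PO} -- but I would want to handle the even-$n$ case carefully, since then $\sum_i x_i$ has an atom at $0$ and $\MAJ_n$ requires a tie-breaking convention; either way the asymptotic $(\sqrt{2/\pi} + o(1))\sqrt{n}$ is unaffected.
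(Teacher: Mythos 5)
Your proposal is correct and follows essentially the same route as the paper: identify $Q_G$ with the set of LTFs, use $\Inf_i[f] = |\hat{f}(\{i\})|$ to reduce maximizing $\I[f]$ to maximizing $\expect_x\bigl[f(x)\sum_i x_i\bigr]$, conclude $\MAJ_n$ is the extremizer, and apply the CLT for the asymptotic. Your explicit sign-flipping step to ensure $\hat{f}(\{i\}) \geq 0$ is actually slightly more careful than the paper's presentation, which writes $\sum_i |\hat{f}(i)| = \expect_x[(\sum_i x_i) f(x)]$ without flagging that this requires a WLOG normalization of signs.
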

\begin{proof}
The set $Q_G$ is exactly the set of LTFs on $n$ variables. If $f$ is an LTF on $n$ variables, then following the approach of the previous lemma,
$$\I[f] = \sum_{i \in [n]} |\hat{f}(i)| = \expect_{x}[(x_1 + \cdots + x_n)f(x)] \leq \expect_{x}[|x_1 + \cdots + x_n|]$$
with equality if and only if $f(x) = \sgn(x_1 + \cdots x_n) = \MAJ_n(x)$. Thus, $\I[G] = \I[\MAJ_n]$. To get the asymptotic formula, we apply the central limit theorem to conclude
$$\frac{x_1 + \cdots + x_n}{\sqrt{n}} \overset{\text{d}}\longrightarrow \mathcal{N}(0,1)$$
and hence
$$\frac{1}{\sqrt{n}}\E[|x_1 + \cdots + x_n|] \to \E[|\mathcal{N}(0,1)|] = \sqrt{2/\pi}.$$
\end{proof}

\section{$\I_{GL}$ computation}\label{igl}
We show the derivation in the case of $d=2$ for the sake of simplicity, but the argument generalizes to all $d$. 

\begin{lem}\label{real GL}

Let $f:\{-1,1\}^{n}\rightarrow \{-1,1\}$ be a symmetric QTF. Then,
$$ \I[f] \leq
\begin{cases}
n \cdot 2^{-n+1} \binom{n}{\frac{n-1}{2}}, & n \text{ is odd} \\
n \cdot 2^{-n} \left[2\binom{n}{\frac{n-2}{2}} + \binom{n}{\frac{n}{2}}\right], & n \text{ is even}
\end{cases}
.$$
\end{lem}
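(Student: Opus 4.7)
The plan is to reduce any symmetric QTF to the form $\sgn(g(s))$ for a univariate degree-$\le 2$ polynomial $g$ in $s = x_1+\cdots+x_n$, then count boundary edges exactly using Pascal's identity. First, if $f = \sgn(q(x))$ is a QTF that is symmetric under permutations of coordinates, then for each $x$ and each permutation $\sigma$ we have $\sgn(q(\sigma x)) = f(\sigma x) = f(x)$, so all $n!$ conjugates $\{q(\sigma x)\}_\sigma$ share the common sign $f(x)$. Their average $\tilde q(x) := \frac{1}{n!}\sum_\sigma q(\sigma x)$ therefore inherits that sign, and $\tilde q$ is a symmetric quadratic. Since the only symmetric polynomials of degree $\le 2$ on $\{-1,1\}^n$ are linear combinations of $1$, $\sum_i x_i$, and $\sum_{i<j}x_i x_j = (s^2 - n)/2$, we can write $\tilde q(x) = g(s)$ for some univariate polynomial $g$ of degree at most $2$.

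By permutation symmetry, $\I[f] = n \cdot \Inf_1[f]$. To compute $\Inf_1[f]$ I introduce the sign-change set $T := \{k : \sgn(g(k)) \ne \sgn(g(k+2))\}$ along the $s$-lattice. Since $s(x^{\oplus 1}) = s(x) - 2x_1$, conditioning on $x_1$ shows the event $f(x) \ne f(x^{\oplus 1})$ holds precisely when $(s(x), x_1) \in \{(k+2, +1), (k, -1)\}$ for some $k \in T$, and each of these sub-events has probability $\binom{n-1}{(n+k)/2}/2^n$. Hence
$$\Inf_1[f] = \frac{1}{2^{n-1}}\sum_{k \in T}\binom{n-1}{(n+k)/2},$$
with the crucial constraint $|T| \le 2$ since $g$ has at most two real roots.

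It remains to maximize this sum over admissible $T$ (two distinct integers of the same parity as $n$, spaced by at least $2$). For odd $n$, the optimum is $T = \{-1, 1\}$, and Pascal's identity $\binom{n-1}{(n-1)/2} + \binom{n-1}{(n+1)/2} = \binom{n}{(n-1)/2}$ delivers the bound $\I[f] \le n \cdot 2^{-n+1}\binom{n}{(n-1)/2}$ exactly. For even $n$, the optimum $T = \{-2, 0\}$ gives $\binom{n-1}{(n-2)/2} + \binom{n-1}{n/2} = \binom{n}{n/2}$, producing the tight bound $\I[f] \le n \cdot 2^{-n+1}\binom{n}{n/2}$; the slightly weaker form in the lemma statement then follows from the crude inequality $\binom{n}{n/2} \le 2\binom{n}{(n-2)/2}$, which yields $2\binom{n}{n/2} \le 2\binom{n}{(n-2)/2} + \binom{n}{n/2}$ and hence the claimed $n\cdot 2^{-n}[2\binom{n}{(n-2)/2} + \binom{n}{n/2}]$.

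The main obstacle is the first step, the symmetrization of the QTF representation. One has to verify that averaging a quadratic $q$ over the symmetric group does not destroy its sign; this works because every summand $q(\sigma x)$ carries the common sign $f(x)$ forced by the symmetry of $f$, so their average cannot vanish or flip. Without this observation the reduction to a univariate polynomial in $s$ fails and the entire combinatorial analysis that follows is unavailable; everything after Step 1 is then just Pascal-style manipulation of central binomial coefficients and careful accounting of where the two roots of $g$ can lie on the $s$-lattice.
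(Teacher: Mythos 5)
Your proof is correct and follows the same overall strategy as the paper's: reduce a symmetric QTF to the sign of a univariate degree-$\le 2$ polynomial in $s=\sum_i x_i$, use $\I[f] = n\cdot\Inf_1[f]$, and maximize over the placement of the at-most-two sign changes. You do, however, fill in two gaps that the paper's proof leaves implicit. First, you justify the reduction to $\sgn(g(s))$ by symmetrizing the representing quadratic $q$ over $S_n$ and observing that all conjugates $q(\sigma x)$ share the sign $f(x)$, so the average $\tilde q$ cannot vanish and remains a valid (symmetric) representative; the paper simply assumes $f = \sgn(p(t))$ without comment. Second, your $|T|\le 2$ argument together with unimodality of $\binom{n-1}{\cdot}$ makes the optimality of the chosen roots rigorous, whereas the paper informally says it ``would like the roots to be close to 0'' and then picks specific values. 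One small point worth flagging: the lemma's stated even-$n$ formula, $n\cdot 2^{-n}\bigl[2\binom{n}{(n-2)/2}+\binom{n}{n/2}\bigr]$, does not match what the paper's own proof (or yours) delivers. The proof computes $\Inf_1[f] = 2^{-n}\bigl[\binom{n}{n/2}+2\binom{n-1}{(n-2)/2}\bigr]$, i.e.\ the binomial in the first term should be $\binom{n-1}{(n-2)/2}$, and since $\binom{n}{n/2} = 2\binom{n-1}{(n-2)/2}$ this simplifies to exactly your tight expression $n\cdot 2^{-n+1}\binom{n}{n/2}$; checking $n=4$ gives $3$ versus the stated formula's $3.5$, and $3$ is the value consistent with $\I_{GL}(4,2)$. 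So your ``crude inequality'' step, while logically sound, is covering for a typo in the lemma statement rather than a real slack in the bound.
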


\begin{proof}
Let $f$ be a symmetric QTF with the form $f(x) = \sgn(p(t))$, where $p$ is a polynomial on the sum of variables $t := \sum_{i \in [n]} x_{i}$. Ideally, we would like the roots of $p$ to be between the possible values of $t$ so that changing the value of a coordinate corresponds to a change in the sign of $p$. Moreover, since the number of inputs $x$ such that $t(x) = k$ is $\binom{n}{\frac{n-k}{2}}$ (a decreasing function of $k$), we would like  the roots to be close to 0. Hence,
\begin{enumerate}
\item When $n$ is odd, we choose the roots of $p$ to be at $t = 0,2.$ Thus, by the well-known binomial identity $ \binom{n}{k} = \binom{n-1}{k} + \binom{n-1}{k-1} $,
$$ \Inf_{1}[f] = 2^{-n} \left[\binom{n}{\frac{n-1}{2}} + \binom{n-1}{\frac{n-3}{2}} + \binom{n-1}{\frac{n-1}{2}}\right] = 2^{-n+1} \binom{n}{\frac{n-1}{2}} $$
where $x_{1}$ is restricted to be -1 at the boundary value $t = -1$ and, similarly, 1 at $t = 3.$
\item When $n$ is even, we choose the roots of $p$ to be at $t = -1,1.$ Similarly, $$ \Inf_{1}[f] = 2^{-n} \left[\binom{n}{\frac{n}{2}} + 2\binom{n-1}{\frac{n-2}{2}}\right],$$
where, as before, $x_{1}=-1$ when $t = -2$ and $x_{1}=1$ when $t = 2.$
Since $f$ is symmetric, $$\I[f] = n \cdot \Inf_{1}[f].$$
\end{enumerate}
\end{proof}

Note that, for $d=2$ and $n=5$, we obtain the value 3.125. We take care to calculate this value as there is ambiguity in the original conjecture that results in a higher value of 3.4375, which was verified through exhaustive search to be unattainable by symmetric QTFs.

\end{document}